\newtheorem{thrm}{Theorem}
\newtheorem{lemm}[thrm]{Lemma}
\newtheorem{remk}[thrm]{Remark}
\newtheorem{defn}[thrm]{Definition}
\newtheorem{cor}[thrm]{Corollary}
\def\itbf#1{\textit{\textbf{#1}}}
\def\s#1{\mbox{\boldmath $#1$}}
\def\pref#1{\mbox{pref(\s{#1})}}
\def\suff#1{\mbox{suff(\s{#1})}}
\def\match{\approx}
\def\:{\mbox{\ :\ }}
\def\floor#1{\lfloor #1 \rfloor}
\def\+{\!+\!}
\def\-{\!-\!}
\def\bproc{{\bf procedure\ }}
\def\bfor{{\bf for\ }}
\def\bto{{\bf to\ }}
\def\bwhile{{\bf while\ }}
\def\band{{\bf and\ }}
\def\bdo{{\bf do\ }}
\def\bif{{\bf if\ }}
\def\bthen{{\bf then\ }}
\def\belse{{\bf else\ }}
\def\la{\leftarrow}
\def\qq{\qquad}
\def\com#1{\hspace{24pt}{\bf $\triangleright$}\hspace{6pt}{\sl #1}}
\def\pref(#1,#2){$#1$ is a prefix of $#2$}
\def\suff(#1,#2){$#1$ is a suffix of $#2$}
\def\reg(#1,#2){$#2$ is $#1$-regular}
\def\notreg(#1,#2){$#2$ is not $#1$-regular}
\begin{document}

\title{Indeterminate Strings, Prefix Arrays \& Undirected Graphs}

\author{Manolis Christodoulakis\inst{1}
\and P.\ J.\ Ryan\inst{2}
\and W.\ F.\ Smyth\inst{2,}\thanks{The work of the third author was supported in part by a grant from the Natural Sciences \& Engineering Research Council of Canada.}
\and Shu Wang\inst{3}
}

\institute{$\!^1\ $Department of Electrical \& Computer Engineering \\
University of Cyprus, PO Box 20537, 1687 Nicosia, Cyprus \\
\email{christodoulakis.manolis@ucy.ac.cy} \\
{\ } \\
$\!^2\ $Algorithms Research Group, Department of Computing \& Software \\
McMaster University, Hamilton, Ontario, Canada\ L8S 4K1 \\
\email{\{ryanpj,smyth\}@mcmaster.ca} \\
\texttt{www.cas.mcmaster.ca/cas/research/algorithms.htm} \\
{\ } \\
$\!^3\ $IBM Toronto Software Lab \\
8200 Warden Avenue, Markham, Ontario, Canada\ L6G 1C7 \\
\email{wangs@ca.ibm.com} \\
}

\maketitle

\begin{center}
\today
\end{center}

\begin{abstract}
An integer array $\s{y} = \s{y}[1..n]$ is said to be \itbf{feasible}
if and only if $\s{y}[1] = n$ and, for every $i \in 2..n$,
$i \le i\+ \s{y}[i] \le n\+ 1$.
A string is said to be \itbf{indeterminate} if and only if at least one of its elements
is a subset of cardinality greater than one
of a given alphabet $\Sigma$;
otherwise it is said to be \itbf{regular}.
A feasible array \s{y} is said to be \itbf{regular}
if and only if it is the prefix array of some regular string.
We show using a graph model that every feasible array of integers
is a prefix array of some (indeterminate or regular) string,
and for regular strings corresponding to \s{y},
we use the model to provide a lower bound on the alphabet size.
We show further that there is a 1--1 correspondence between
labelled simple graphs and indeterminate strings,
and we show how to determine the minimum alphabet size $\sigma$
of an indeterminate string \s{x} based on its
\itbf{associated graph} $\mathcal{G}_{\s{x}}$.
Thus, in this sense, indeterminate strings are a more natural object of combinatorial interest
than the strings on elements of $\Sigma$ that have traditionally been studied.
\end{abstract}

\noindent
{\bf Keywords.}
Indeterminate string; Regular string; Prefix array; Feasible array; Undirected graph; Minimum alphabet size; Lexicographical order.

\section{Introduction}
\label{sect-intro}
Traditionally, a string is a sequence of letters taken from some alphabet
$\Sigma$.  Since we discuss ``indeterminate strings" in this paper,
we begin by generalizing the definition as follows:
\begin{defn}
\label{defn-string}
A \itbf{string} with base alphabet $\Sigma$ is either empty
or else a sequence of nonempty subsets of $\Sigma$.
A 1-element subset of $\Sigma$ is called a \itbf{regular} letter;
otherwise it is \itbf{indeterminate}.
Similarly, a nonempty string consisting only of regular letters is \itbf{regular},
otherwise \itbf{indeterminate}.
The empty string \s{\varepsilon} is regular.
\end{defn}
All alphabets and all strings discussed in this paper are finite.
We denote by $\Sigma'$ the set of all nonempty subsets of $\Sigma$,
with $\sigma = |\Sigma|$ and $\sigma' = |\Sigma'| = 2^{\sigma}\- 1$.
On a given alphabet $\Sigma$, there are altogether $(\sigma')^n$ distinct nonempty strings
of length $n$,
of which $\sigma^n$ are regular.

Indeterminate strings were first introduced in a famous paper
by Fischer \& Paterson \cite{FP74},
then later studied by Abrahamson \cite{A87}.
In the last ten years or so, much work has been done
by Blanchet-Sadri and her associates
(for example, \cite{BSH02}) on ``strings with holes'' --- that is,
strings on an alphabet $\Sigma$ augmented by a single letter
consisting of the $\sigma$-element subset of $\Sigma$.
The monograph \cite{B08} summarizes much of the pioneering work in this area.
For indeterminate strings in their full generality,
the third and fourth authors of this paper have collaborated
on several papers,
especially in the contexts of pattern-matching \cite{HS03,HSW06,HSW08,SW09}
and extensions to periodicity \cite{SW08,SW09a}.

\begin{defn}
\label{defn-match}
Two elements $\lambda,\mu$ of $\Sigma'$ are said to \itbf{match}
(written $\lambda \match \mu$) if they have nonempty intersection.
Two strings \s{x}, \s{y} \itbf{match} ($\s{x} \match \s{y}$)
if they have the same length and all corresponding letters match.
\end{defn}
Thus two regular letters match if and only if they are equal.
But note that for indeterminate letters $\lambda, \mu, \nu$,
it may be that $\lambda \match \mu$ and $\lambda \match \nu$,
while $\mu \not\match \nu$:
think $\lambda = \{1,2\}, \mu = 1, \nu = 2$.

\begin{defn}
\label{defn-border}
If a string \s{x} can be written $\s{x} = \s{u_1v}$ and $\s{x} = \s{wu_2}$
for nonempty strings \s{v}, \s{w}, where $\s{u_1} \match \s{u_2}$,
then \s{x} is said to have a \itbf{border} of length $|\s{u_1}| = |\s{u_2}|$.
\end{defn}
Note that choosing $\s{v} = \s{w} = \s{x}$ yields the empty border
\s{\varepsilon} of length $0$.

The \itbf{border array} of a string $\s{x} = \s{x}[1..n]$ is an integer array
$\s{\beta}[1..n]$ such that $\s{\beta}[i]$ is the length of the longest
border of $\s{x}[1..i]$.
For regular strings \s{x}, the border array has the desirable property,
used in pattern-matching algorithms for more than 40 years \cite{MP70},
that any border of a border of \s{x} is also a border of \s{x} --- thus
\s{\beta} actually specifies every border of every prefix of \s{x}.
For indeterminate strings, however,
due to the intransitivity of the match operation, this is not true \cite{SW09,SW09a};
for example,
\begin{equation}
\label{noborder}
\s{u} = a\{a,b\}b
\end{equation}
has a border of length 2 ($a\{a,b\} \match \{a,b\}b$),
and both borders $a\{a,b\}$ and $\{a,b\}b$ have a border of length 1
($a \match \{a,b\}$ and $\{a,b\} \match b$, respectively),
but \s{u} has no border of length 1.
To make sense of such situations, the ``prefix array'' becomes important:

\begin{defn}
\label{defn-prefix}
The \itbf{prefix array} of a string $\s{x} = \s{x}[1..n]$ is the integer array
$\s{y} = \s{y}[1..n]$ such that for every $i \in 1..n$, $\s{y}[i]$ is the
length of the longest prefix of $\s{x}[i..n]$ that matches a prefix of $\s{x}$.
Thus for every prefix array \s{y}, $\s{y}[1] = n$.
\end{defn}
Apparently the first algorithm for computing the prefix array
occurred as a routine in the repetitions algorithm of Main \& Lorentz \cite{ML84};
see also \cite[pp.\ 340--347]{S03}.
A slightly improved algorithm is given in \cite[Section 8.4]{L05},
and two algorithms for computing a ``compressed'' prefix array are
described in \cite{SW08}.

For regular strings the border array and the prefix array are equivalent:
it is claimed in \cite{CHL01,CHL07}, and not difficult to verify, that
there are $\Theta(n)$-time algorithms to compute one from the other.
On the other hand, as shown in \cite{SW08}, for indeterminate strings
the prefix array actually allows all borders of every prefix to be specified,
while the border array does not
\cite{HS03,IMMP03}.
Thus the prefix array provides a more compact and more general mechanism for
identifying borders, hence for describing periodicity, in indeterminate strings.
In the above example (\ref{noborder}), the prefix array of \s{u}
is $\s{y} = 320$, telling us that $\s{u}[2..3] \match \s{u}[1..2]$
(\s{u} has a border of length 2),
hence that $\s{u}[2] \match \s{u}[1]$
(prefix $\s{u}[1..2]$ has a border of length 1)
and $\s{u}[3] \match \s{u}[2]$
(suffix $\s{u}[2..3]$ has a border of length 1),
but, since $\s{y}[3] = 0$, also that \s{u} has no border of length 1.

\cite{SW08} describes an algorithm that computes the prefix array
of any indeterminate string;
in this paper we consider the ``reverse engineering'' problem of computing a string
corresponding to a given ``feasible'' array --- that is,
any array that could conceivably be a prefix array:

\begin{defn}
\label{defn-feasible}
An integer array $\s{y} = \s{y}[1..n]$ such that $\s{y}[1] = n$ and,
for every $i \in 2..n$,
\begin{equation}
\label{condition-feasible}
0 \le \s{y}[i] \le n\+ 1\- i,
\end{equation}
is said to be \itbf{feasible}.
A feasible array that is a prefix array of a regular string is said to be \itbf{regular}.
\end{defn}
We will often use the condition
$i \le i \+ \s{y}[i] \le n\+ 1$, equivalent to (\ref{condition-feasible}).
Note that there are $n!$ distinct feasible arrays of length $n$.
Recalling that there are $(2^{\sigma}\- 1)^n$ distinct strings of length $n$
for a fixed alphabet size $\sigma$,
and applying Stirling's inequality \cite[p.\ 479]{K68}
$$n! > \sqrt{2\pi n}(n/e)^n,$$
where $e = 2.718\cdots$ is the base of the natural logarithm,
we see that (for fixed $\sigma$) the number of feasible arrays
exceeds the number of strings whenever $n$ is large enough that
\begin{equation}
\label{asymp}
\sqrt{2\pi n}\Big(\frac{n}{e(2^{\sigma}\- 1)}\Big)^n > 1.
\end{equation}

The first reverse engineering problem was introduced in
\cite{FLRS99,FGLR02}, where a linear-time algorithm was described
to compute a lexicographically least string whose border array was
a given integer array --- or to return the result that no such string exists.
There have been many such results published since;
for example, \cite{BIST03,DLL05,FS06}.
In \cite{CCR09} an $O(n)$ time algorithm is described
to solve the reverse engineering problem for a given feasible
array $\s{y} = \s{y}[1..n]$;
that is, whenever \s{y} is regular, computing a
lexicographically least regular string \s{x}
corresponding to \s{y};
and, whenever \s{y} is not regular, reporting failure.

In Section~\ref{sect-indet}, notwithstanding (\ref{asymp}),
we prove the surprising result
that every feasible array
is in fact a prefix array of some string (on some alphabet);
further, we characterize the minimum alphabet size of a regular string
corresponding to a given prefix array in terms of the largest clique
in the negative ``prefix'' graph $\mathcal{P}^-$.
We go on to give necessary and sufficient conditions
that a given prefix array is regular.
Section~\ref{sect-graphs} establishes the duality between
strings (whether regular or indeterminate) and labelled undirected graphs;
also it provides a characterization of the minimum alphabet size
of an indeterminate string \s{x} in terms of the number of
``independent'' maximal cliques in the ``associated graph" $\mathcal{G}_{\s{x}}$.
Section~\ref{sect-conc} outlines future work.

\section{Prefix Arrays \& Indeterminate Strings}
\label{sect-indet}

We begin with an immediate consequence of Definition~\ref{defn-prefix}:
\begin{lemm}
\label{lemm-easy}
Let $\s{x} = \s{x}[1..n]$ be a string.  An integer array $\s{y} = \s{y}[1..n]$ is the prefix array of $\s{x}$  if and only if for each position $i \in 1..n$,
the following two conditions hold:
\begin{itemize}
\item[(a)] $\s{x}\big[1..\s{y}[i]\big] \match \s{x}\big[i..i+ \s{y}[i]-1\big]$ ;
\item[(b)] if $i + \s{y}[i] \le n$, then $\s{x}\big[y[i]+1\big] \not\match \s{x}\big[i +\s{y}[i]\big]$.
\end{itemize}
\end{lemm}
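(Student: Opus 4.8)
The plan is to prove Lemma~\ref{lemm-easy} by a direct unpacking of Definition~\ref{defn-prefix}, showing that conditions (a) and (b) together are equivalent to the statement that $\s{y}[i]$ is the length of the \emph{longest} prefix of $\s{x}[i..n]$ matching a prefix of $\s{x}$. The natural approach is to treat the two directions separately, and within each direction to split ``longest matching prefix of length $\s{y}[i]$'' into its two constituent assertions: first, that a matching prefix of length $\s{y}[i]$ \emph{exists}, and second, that no \emph{longer} matching prefix exists. Condition (a) will capture existence and condition (b) will capture maximality.

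First I would prove the forward direction. Assume $\s{y}$ is the prefix array of $\s{x}$ and fix $i \in 1..n$. By Definition~\ref{defn-prefix}, $\s{y}[i]$ is the length of the longest prefix of $\s{x}[i..n]$ that matches a prefix of $\s{x}$; call this common length $\ell = \s{y}[i]$. The prefix of $\s{x}[i..n]$ of length $\ell$ is exactly $\s{x}[i..i\+\ell\-1]$, and the prefix of $\s{x}$ of length $\ell$ is $\s{x}[1..\ell]$, so the fact that these match is precisely condition (a). For condition (b), suppose $i\+\ell \le n$, so that a prefix of length $\ell\+1$ of $\s{x}[i..n]$ exists. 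If we had $\s{x}[\ell\+1] \match \s{x}[i\+\ell]$, then, combined with the length-$\ell$ match from (a) and the fact that matching of strings is defined letterwise (Definition~\ref{defn-match}), the prefix $\s{x}[i..i\+\ell]$ of length $\ell\+1$ would match $\s{x}[1..\ell\+1]$, contradicting the maximality of $\ell$. Hence $\s{x}[\ell\+1] \not\match \s{x}[i\+\ell]$, which is condition (b).

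Conversely, assume (a) and (b) hold for every $i$, and fix $i$ with $\ell = \s{y}[i]$. Condition (a) shows that $\s{x}[i..n]$ has a prefix of length $\ell$ matching a prefix of $\s{x}$, so the longest such matching prefix has length at least $\ell$. It remains to rule out a strictly longer match. If $i\+\ell > n$ then $\ell$ is already the full length of $\s{x}[i..n]$ and no longer prefix exists, so we are done. Otherwise $i\+\ell \le n$ and condition (b) gives $\s{x}[\ell\+1] \not\match \s{x}[i\+\ell]$; since any matching prefix of length $\ell\+1$ would in particular require its final letters $\s{x}[\ell\+1]$ and $\s{x}[i\+\ell]$ to match, no matching prefix of length $\ell\+1$ exists, and hence none of any length exceeding $\ell$. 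Thus $\ell = \s{y}[i]$ is indeed the longest matching prefix length, and $\s{y}$ is the prefix array of $\s{x}$.

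I do not expect any genuinely hard step here, since the lemma is flagged as ``an immediate consequence'' of the definition; the only point requiring care is the maximality argument, where one must note that matching is checked letter by letter, so extending a length-$\ell$ match to length $\ell\+1$ hinges solely on the single new pair of letters $\s{x}[\ell\+1]$ and $\s{x}[i\+\ell]$ rather than on any transitivity of $\match$. This is precisely why the intransitivity of the match operation emphasized earlier does not cause trouble: we never chain matches, we only append one letter at a time.
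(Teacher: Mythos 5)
Your proof is correct and is exactly the direct unpacking of Definition~\ref{defn-prefix} that the paper has in mind: the paper offers no proof at all, presenting the lemma as an ``immediate consequence'' of the definition, and your argument simply supplies the omitted details (existence via (a), maximality via (b), checked letterwise). Your closing remark that the intransitivity of $\match$ is irrelevant here because the extension from length $\s{y}[i]$ to length $\s{y}[i]\+1$ depends only on the single new pair of letters is a worthwhile clarification, not a gap.
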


We now prove the main result of this section.
\begin{lemm}
\label{lemm-feas}
Every feasible array is the prefix array of some string.
\end{lemm}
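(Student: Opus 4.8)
The plan is to reduce to the two pointwise conditions of Lemma~\ref{lemm-easy} and then to realise them through a graph on the vertex set $\{1,\dots,n\}$ of positions. First I would read off from condition (a) a family of \emph{positive} requirements ``$\s{x}[j] \match \s{x}[i+j-1]$'' (for each $i$ and each $j \in 1..\s{y}[i]$) and from condition (b) a family of \emph{negative} requirements ``$\s{x}[\s{y}[i]+1] \not\match \s{x}[i+\s{y}[i]]$'' (for each $i$ with $i+\s{y}[i] \le n$). Each requirement concerns an unordered pair of distinct positions $\{p,q\}$ with $p<q$, and the convenient book-keeping fact is that such a pair can only be generated by the single index $i = q-p+1$: the pair is positive precisely when $\s{y}[q-p+1] \ge p$ (take $j=p$), and negative precisely when $\s{y}[q-p+1] = p-1$ (take $\s{y}[i]+1 = p$).

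Next I would form the \emph{positive graph} $\mathcal{P}$ whose edges are exactly the positive pairs, and turn it into a string in the standard way: give every edge of $\mathcal{P}$ its own alphabet symbol, and let $\s{x}[p]$ be the set of symbols of the edges incident to $p$, augmented by one private symbol occurring at no other position (so that $\s{x}[p]$ is nonempty and \s{x} is a legitimate string by Definition~\ref{defn-string}). Under this labelling two positions match if and only if they are adjacent in $\mathcal{P}$, since the only symbol two distinct vertices can share is the label of the edge joining them, the private symbols being pairwise distinct.

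It remains to verify (a) and (b) for this \s{x}. Condition (a) is immediate, as every positive pair is an edge of $\mathcal{P}$ and hence a match. Condition (b) asks that each negative pair be a non-match, i.e.\ a non-edge of $\mathcal{P}$; equivalently, no pair may be simultaneously positive and negative. I expect this consistency to be the heart of the proof, and it is settled by the book-keeping fact above: for a fixed pair $\{p,q\}$ the positive condition $\s{y}[q-p+1] \ge p$ and the negative condition $\s{y}[q-p+1] = p-1$ cannot both hold. Thus every negative pair is a genuine non-edge, receives a symbol set disjoint from its partner's, and (b) follows. Feasibility~(\ref{condition-feasible}) enters only to keep all of the indices $i+j-1$ and $i+\s{y}[i]$ within $1..n$, so that the pairs above are well defined; no further appeal to it is needed.
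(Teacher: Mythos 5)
Your proof is correct and follows essentially the same route as the paper: both construct the graph of positive pairs on the positions $1..n$, use the edges themselves as letters, and set $\s{x}[p]$ to the set of labels incident to $p$ (the paper attaches a loop only at isolated vertices where you attach a private symbol at every position, an immaterial difference). Indeed, your explicit book-keeping observation---that a pair $\{p,q\}$ can arise only from the index $i=q-p+1$, so the positive condition $\s{y}[i]\ge p$ and the negative condition $\s{y}[i]=p-1$ cannot both hold---supplies precisely the justification that the paper's own verification of Lemma~\ref{lemm-easy}(b) leaves implicit when it says the match ``contradicts'' that lemma.
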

 \begin{proof}
 Consider an undirected graph $\mathcal{P} = (V, E)$ whose vertex set $V$ is the set of positions $1..n$ in a given feasible
 array $\s{y}$.  The edge set $E$ consists of the 2-element subsets $(h, k)$ such that
 \begin{equation}
\label{eq}
 h \in 1..\s{y}[i]; \ k = i+h-1
 \end{equation}
for every $i \in 2..n$.
We then define \s{x} as follows:
for each non-isolated vertex $i$,
let $\s{x}[i]$ be the set of edges incident with $i$;
for each isolated vertex $i$, let $\s{x}[i]$ be the loop $\{i, i\}$.
Let $\Sigma = E \cup L$ where $L$ is the set of loops.
We claim that $\s{y}$ is the prefix array of $\s{x} = \s{x}[1..n]$.

To see this, note that for an index $i$ such that $\s{y}[i]>0$,
Lemma \ref{lemm-easy}(a) is satisfied by construction.
Then suppose that for some $\s{y}[i] > 0$ and $i\+ \s{y}[i] \le n$,
$\s{x}\big[y[i]\+ 1\big] \match \s{x}\big[i\+ \s{y}[i]\big]$.
But this contradicts Lemma~\ref{lemm-easy}(b),
and so $$\s{x}\big[y[i]\+ 1\big] \not\match \s{x}\big[i\+ \s{y}[i]\big].$$

In case $\s{y}[i] = 0$, Lemma \ref{lemm-easy}(a) is satisfied vacuously.
Moreover, $i$ is isolated and thus $\s{x}[i] = \{i, i\}$,
which does not match $\s{x}[1]$; consequently,
Lemma  \ref{lemm-easy}(b) is again satisfied.
Therefore, $\s{y}$
 coincides with the prefix array of $\s{x}$, which is a string over the set $\Sigma'$ of subsets of $\Sigma$. \qed
 \end{proof}

\begin{figure}[t]
  \begin{minipage}{0.5\linewidth}
  \centering
  \includegraphics*[viewport = 1.5cm 12cm 9cm 19cm,scale=0.7]{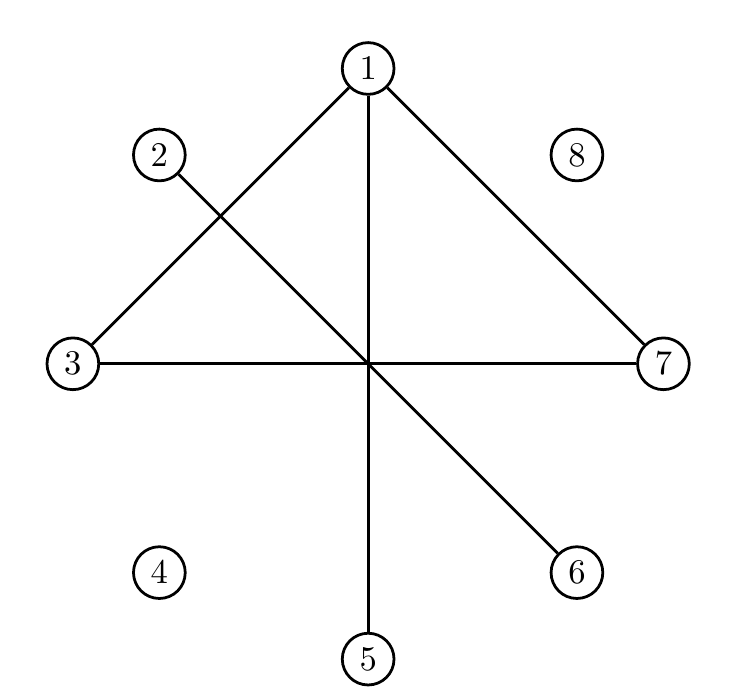}
  \caption{$\mathcal{P}_{\s{y_1}}^+$ for $\s{y_1} = 80103010$}\label{Graph-ex-1-p}
  \end{minipage}
  \hfill
  \begin{minipage}{0.5\linewidth}
  \centering
  \includegraphics*[viewport = 1.5cm 12cm 9cm 19cm,scale=0.7]{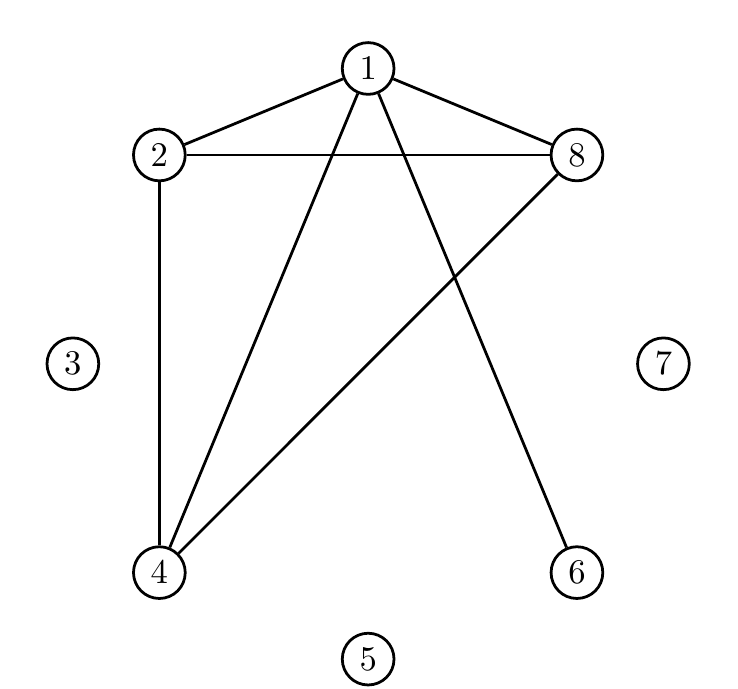}\\
  \caption{$\mathcal{P}_{\s{y_1}}^-$ for $\s{y_1} = 80103010$}\label{Graph-ex-1-n}
  \end{minipage}
\end{figure}

The construction described in this proof yields a string \s{x}
whose prefix array is \s{y}, but \s{x} is only one string among many.
For example, given
the feasible array $\s{y} = 80103010$, this construction yields (temporarily
simplifying the notation)
edges $E = \{13, 15, 26, 37, 17\}$ and loops $L = \{44,88\}$.  Relabelling these
seven edges/loops as $a,b,c,d,e,f,g$ respectively, we construct \s{x} as described in the proof of Lemma~\ref{lemm-feas}:
\begin{equation}
\label{reg}
\s{x} = \{a,b,e\}\{c\}\{a,d\}\{f\}\{b\}\{c\}\{d,e\}\{g\},
\end{equation}
an indeterminate string, when in fact \s{y} is also the prefix array
of the regular string $\s{x} = abacabad$
(and so, by Definition~\ref{defn-feasible}, itself regular).

\begin{defn}
\label{defn-G}
Let $\mathcal{P} = (V,E)$ be a labelled graph with vertex set $V = \{1,2,\ldots,n\}$
consisting of positions in a given feasible array $\s{y}.$
In $\mathcal{P}$ we define, for $i \in 2..n$, two kinds of edge
(compare Lemma~\ref{lemm-easy}):
\begin{itemize}
\item[(a)]
for every $h \in 1..\s{y}[i]$, $(h,i\+ h\- 1)$ is called a \itbf{positive edge};
\item[(b)]
$(1\+ \s{y}[i],i\+ \s{y}[i])$ is called a \itbf{negative edge}, provided $i\+ \s{y}[i] \le n.$
\end{itemize}
$E^+$ and $ E^-$ denote the sets of positive and negative edges, respectively.  We write
$E = E^+ \cup E^-$, $\mathcal{P}^+ = (V,E^+)$,
$\mathcal{P}^- = (V,E^-)$,
and we call $\mathcal{P}$ the \itbf{prefix graph} of $\s{y}$.
If $\s{x}$ is a string having $\s{y}$
as its prefix array,
then we also refer to $\mathcal{P}$ as the prefix graph
 of $\s{x}$.
\end{defn}

\begin{figure}[t]
  \begin{minipage}{0.5\linewidth}
  \centering
  \includegraphics*[viewport = 1.5cm 12cm 9cm 19cm,scale=0.7]{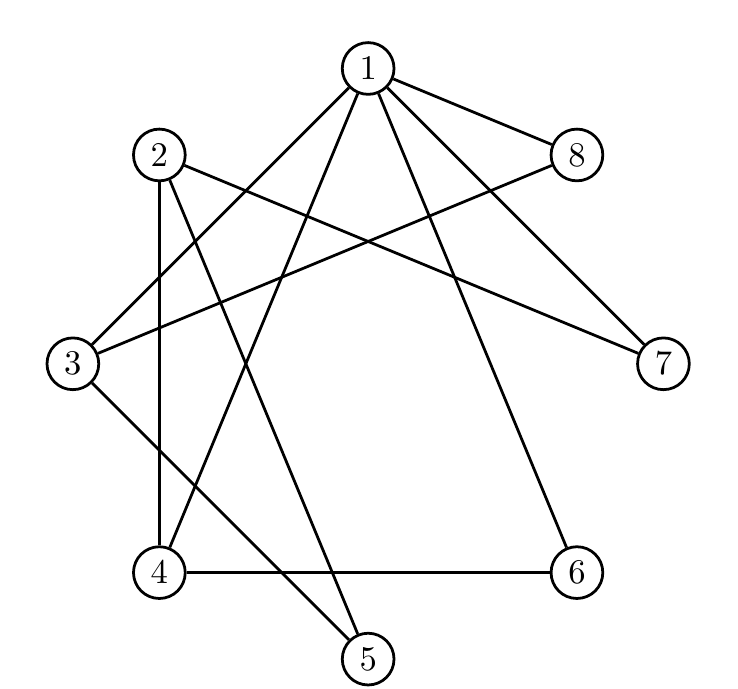}\\
  \caption{$\mathcal{P}_{\s{y_2}}^+$ for $\s{y_2} = 80420311$}\label{Graph-ex-2-p}
  \end{minipage}
  \hfill
  \begin{minipage}{0.5\linewidth}
  \centering
  \includegraphics*[viewport = 1.5cm 12cm 9cm 19cm,scale=0.7]{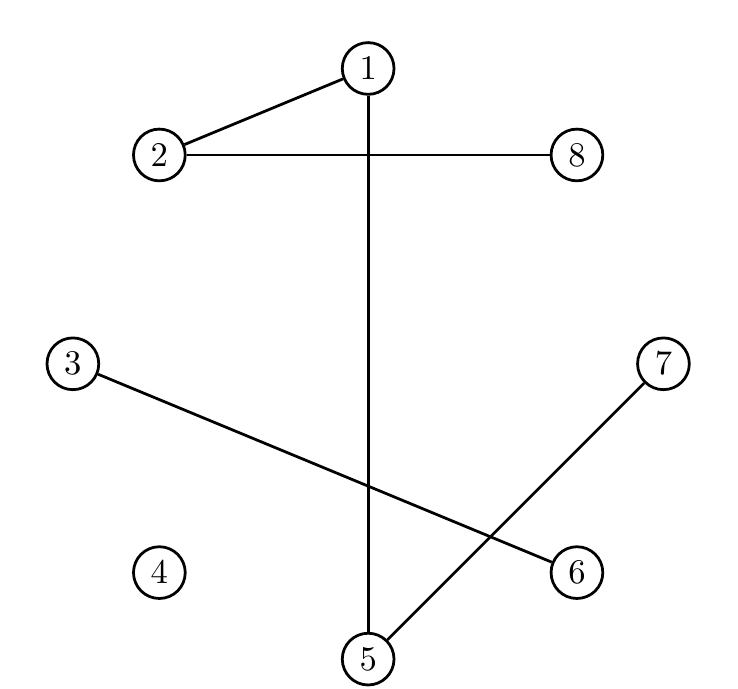}\\
  \caption{$\mathcal{P}_{\s{y_2}}^-$ for $\s{y_2} = 80420311$}\label{Graph-ex-2-n}
  \end{minipage}
\end{figure}

Figures \ref{Graph-ex-1-p}--\ref{Graph-ex-2-n} show the prefix graphs
for
$$ \begin{array}{rccccccc}
\scriptstyle 1 & \scriptstyle 2 & \scriptstyle 3 & \scriptstyle 4 & \scriptstyle 5 & \scriptstyle 6 & \scriptstyle 7 & \scriptstyle 8 \\
\s{y_1} = 8 & 0 & 1 & 0 & 3 & 0 & 1 & 0  \\
\s{y_2} = 8 & 0 & 4 & 2 & 0 & 3 & 1 & 1
\end{array} $$

From Definition~\ref{defn-G} it is clear that
\begin{remk}
\label{rem-Py}
For every feasible array \s{y}, there exists one and only one
prefix graph $\mathcal{P}$,
which therefore may be written $\mathcal{P}_{\s{y}}$;
moreover, $\mathcal{P}_{\s{y}} = \mathcal{P}_{\s{y}'}$ if and only if
$\s{y} = \s{y'}$.
\end{remk}

Recall that a graph $\mathcal{G} = (V,E)$ is said to be \itbf{connected} if every
pair of vertices in $V$ is joined by a path in $E$.
A \itbf{connected component} (or \itbf{component}, for short) of $\mathcal{G}$
is a subgraph $\mathcal{G'} = (V',E')$ formed on a largest subset $V' \subseteq V$ such
that every pair of vertices $i,j \in V'$ is joined by a path
formed from edges $E' \subseteq E$.
The graph $\mathcal{P^+}$ of Figure~\ref{Graph-ex-1-p} has two disjoint
connected components, while that of Figure~\ref{Graph-ex-2-p} has only one.

The basic properties of the prefix graph $\mathcal{P}_{\s{y}}$ of a feasible array $\s{y} = \s{y}[1..n]$ are as follows:
\begin{lemm}
\label{lemm-G}
Let $\mathcal{P} = \mathcal{P}_{\s{y}}$ be the prefix graph corresponding to
a given feasible array \s{y}.
\begin{itemize}
\item[(a)] $E^+$ and $E^-$ are disjoint and $|E^-| = n\- s$ where $s$ is the number of indices
$i \in 1..n$
for which $i\+ \s{y}[i] = n+1.$  For every $i \in 2..n$, either $(1,i)\in E^+$ or $(1,i)\in E^-$.
\item[(b)] If $(i,j)\in E^-$, where $i<j$,
then $\s{y}[j\- i\+ 1] = i\- 1$, and
for every $h \in 1..i-1$, $(h, j-i+h) \in E^+$.
\item[(c)] $\s{y}$ is regular if and only if every edge of $\mathcal{P}^-$ joins two vertices in disjoint connected components of $\mathcal{P}^+$.
\end{itemize}
\end{lemm}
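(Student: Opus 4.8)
The plan is to dispatch (a) and (b) by direct computation from Definition~\ref{defn-G}, reserving the real work for (c). The organising observation is that every edge generated by an index $i$ --- whether one of the positive edges $(h, i\+ h\- 1)$ or the negative edge $(1\+ \s{y}[i], i\+ \s{y}[i])$ --- has the same \emph{span} $i\- 1$, the difference of its two endpoints. Hence the span pins down the unique index that can generate a given edge, and within that index the positive edges occupy left endpoints $1..\s{y}[i]$ while the negative edge occupies left endpoint $\s{y}[i]\+ 1$; this yields at once that $E^+$ and $E^-$ are disjoint and that distinct indices produce distinct edges. For the count I would invoke feasibility, which forces $i\+ \s{y}[i] \le n\+ 1$, so the negative edge of an index $i \in 2..n$ is absent exactly when $i\+ \s{y}[i] = n\+ 1$; since $i=1$ always satisfies $1\+ \s{y}[1] = n\+ 1$, the $s$ indices counted in $1..n$ correspond to $s\- 1$ missing negative edges among the $n\- 1$ indices of $2..n$, giving $|E^-| = n\- s$. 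The edge $(1,i)$ has span $i\- 1$ and so is generated by index $i$, positively when $\s{y}[i] \ge 1$ (take $h=1$) and as the negative edge when $\s{y}[i] = 0$. For (b), given $(i,j) \in E^-$ with $i<j$, the definition of the negative edge forces its generating index $k$ to satisfy $i = 1\+ \s{y}[k]$ and $j = k\+ \s{y}[k]$; solving gives $k = j\- i\+ 1$ and $\s{y}[k] = i\- 1$, i.e.\ $\s{y}[j\- i\+ 1] = i\- 1$, while the positive edges of that same $k$ are $(h, j\- i\+ h)$ for $h \in 1..i\- 1$, exactly the asserted family.

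The heart of the lemma is (c), and the plan is to recast ``$\s{y}$ is regular'' as a colouring statement through Lemma~\ref{lemm-easy}. For a \emph{regular} string the relation $\match$ is just equality, so Lemma~\ref{lemm-easy}(a) says precisely that the endpoints of every positive edge carry equal letters and Lemma~\ref{lemm-easy}(b) that the endpoints of every negative edge carry distinct letters. Since equality is transitive, the positive constraints propagate along paths in $\mathcal{P}^+$, forcing every connected component of $\mathcal{P}^+$ to be monochromatic. Thus a regular string with prefix array $\s{y}$ is exactly an assignment of letters to the components of $\mathcal{P}^+$ under which the two endpoints of each negative edge fall into differently-lettered components.

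With this dictionary both implications are short. For necessity, if $\s{y}$ is regular then a negative edge with both endpoints in a single component of $\mathcal{P}^+$ would force those endpoints to be simultaneously equal and unequal, a contradiction, so every negative edge must join disjoint components. For sufficiency, assuming each negative edge joins disjoint components, I would give a distinct letter to each component of $\mathcal{P}^+$ and let $\s{x}[i]$ be the letter of $i$'s component; then every positive edge is monochromatic and every negative edge bichromatic, so Lemma~\ref{lemm-easy} certifies that this regular $\s{x}$ has prefix array $\s{y}$. The step I expect to be the crux is the reformulation itself --- in particular the appeal to transitivity of $\match$ on regular letters that promotes ``positive edges equalise their endpoints'' to ``components of $\mathcal{P}^+$ are monochromatic.'' This is exactly the property that fails for indeterminate letters (witness $a\{a,b\}b$ from the introduction), and it is what makes connected components, rather than individual positive edges, the right object; once it is isolated, the two directions of (c) follow mechanically.
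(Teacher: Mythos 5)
Your proof is correct and takes essentially the same approach as the paper's: parts (a) and (b) by the span/left-endpoint computation straight from the definition of the prefix graph, and part (c) by assigning one distinct letter per connected component of $\mathcal{P}^+$ for sufficiency and by propagating equality along $\mathcal{P}^+$-paths (transitivity of $\match$ on regular letters) to derive a contradiction for necessity. Your explicit accounting for the index $i=1$ in the count $|E^-| = n-s$ is, if anything, slightly more careful than the paper's own wording.
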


 \begin{proof}
 \noindent
 \begin{itemize}
 \item[(a)]
 First fix $i$ and consider edges $(h, k)$, where $k\- h = i\- 1$.
 If $(p\+ 1, p\+ i)\in E^-$ is such an edge, then the
 edges in $E^+$ must satisfy $1 \le h \le p$ and therefore are distinct from $(p\+ 1, p\+ i)$.
 This shows that $E^+$ and $E^-$ are disjoint.
 Secondly, $|E^-| = n\- s$ since there is exactly one negative edge for each of the possible values of $i$,
except those for which $i\+ \s{y}[i] = n\+ 1$.
 Finally, it is easily seen from Definition \ref{defn-G} that $(1,i)$ is a positive
 edge if $\s{y}[i]$ is positive, whereas $(1,i)$ is a negative edge if $\s{y}[i]=0$.
 \item[(b)] The first statement follows from rewriting Definition~\ref{defn-G}(b)
with $j = i\+ \s{y}[i]$, the second directly from Definition~\ref{defn-G}(a).
\item[(c)] [if] Suppose that every negative edge
joins two vertices in disjoint connected components of $\mathcal{P}^+$.
Form a regular string $\s{x}$ as follows: for each component $C$ of $\mathcal{P}^+$, assign a unique identical letter, say $\lambda_C$, to all positions $\s{x}[i]$ for which $i \in C$.  We show that $\s{y}$ is the prefix array of $\s{x}[1..n]$ and therefore that $\s{y}$ is regular.  Fix a value $i \in 2..n$.  For any $j$ such that $1 \le j \le \s{y}[i]$, $(j, j+i-1)$ is a positive edge.
 Thus $j$ and $j+i-1$ are in the same component of $\mathcal{P}^+$, and hence $\s{x}[j] = \s{x}[j+i-1]$.  We also note
that $(\s{y}[i]+1, \s{y}[i]+i)$ is a negative edge
(provided $\s{y}[i]\+ i \le n$).
If so, then by hypothesis $\s{y}[i]+1$ and $\s{y}[i]+i$ lie
in disjoint components of $\mathcal{P}^+$, so that,
by the uniqueness of $\lambda_C$,
$\s{x}\big[\s{y}[i]+1\big] \not\match \s{x}\big[\s{y}[i]+i\big]$.
This is precisely what we need in order to conclude
that $\s{y}$ is the prefix array of $\s{x}[1..n]$.
Since \s{x} is regular, so is \s{y}, as required. \\ 

\item[ ] [only if]
Suppose that \s{y} is regular,
therefore the prefix array of a regular string $\s{x}$.
Now consider any negative edge $(p, q)$ of the prefix graph $\mathcal{P}$
of \s{y}, so that by Lemma~\ref{lemm-easy}(b)
$\s{x}[p] \not\approx \s{x}[q]$.
If $p$ and $q$ were in the same component of $\mathcal{P}^+$,
we would have by Lemma~\ref{lemm-easy}(a)
a path in $\mathcal{P}^+$ joining
$p$ to $q$ consisting of edges $(h, k)$ such that $\s{x}[h] \match \s{x}[k]$.
By the regularity of \s{y}, this requires $\s{x}[h] = \s{x}[k]$,
so that $\s{x}[p] = \s{x}[q]$, a contradiction.  \qed
 \end{itemize}
 \end{proof}

From Definition \ref{defn-G}, we see that $|E^+|$ can be as small as $0$
(for example, when $\s{x} = ab^{n-1}$)
or as large as $ {n \choose 2}$ (when $\s{x} = a^n$).
From Lemma~\ref{lemm-G}(b) we see that many of the edges in $E^+$ can be
deduced from those in $E^-$.
In fact, if we add an extra node $n \+1$ and also, in the cases $i > 1$
for which $i\+ \s{y}[i] = n\+ 1$ --- that is, whenever $\s{x}$ has a border of
length $\s{y}[i] = n\+ 1\- i$ ---, add the edges
$(1+\s{y}[i], n+1)$ to $E^-$, then all of $E^+$ can be deduced from $E^-$.
Let us call this graph with the additional node and edges
the \itbf{augmented prefix graph} and denote it by $\hat{\mathcal{P}}$
with corresponding edge sets $\hat{E}^+ = E^+$ and $\hat{E}^-$.
By Lemma~\ref{lemm-G}(a), $\hat{E}^-$ consists of exactly $n\- 1$ edges,
which together determine $O(n^2)$ edges in $E^+$.
Of course the converse is also true:
$E^+$ determines $\hat{E}^-$.
Hence, from Remark~\ref{rem-Py},
either $\mathcal{P}^+$ or $\hat{\mathcal{P}}^-$ is sufficient to determine
a corresponding prefix array \s{y}.

However, a bit more can be said.  From Lemma~\ref{lemm-G}(b)
we see that every edge $(i,j) \in E^-$ determines
the value $\s{y}[j\- i\+ 1]$ of a position $j\- i\+ 1$ in \s{y}.
Thus a simple scan of \s{y} can identify all positions $h$
that are {\it not} determined by $E^-$;
for all such $h$, it must be true that $\s{y}[h] = n\- h\+ 1$.
In other words $E^-$ determines $\hat{E}^-$.
Writing $A \equiv B$ to mean that $A$ can be computed from $B$, and {\it vice versa},
we may summarize this discussion as follows:
\begin{remk}
\label{remk-equiv}
$\s{y} \equiv \mathcal{P}_{\s{y}}^+ \equiv \hat{\mathcal{P}}_{\s{y}} \equiv \hat{\mathcal{P}}_{\s{y}}^- \equiv \mathcal{P}_{\s{y}}^-$:
the prefix array and the negative prefix graph provide
the same information and so determine
the same set of $($not necessarily regular$)$ strings \s{x}.
\end{remk}

Recall \cite[p.\ 188]{BM08} that a \itbf{$t$-clique} in a graph $\mathcal{G}$
is a complete subgraph $K_t$ of $\mathcal{G}$ on $t$ vertices,
while the \itbf{clique number} $\omega = \omega(\mathcal{G})$ is
the order $t$ of the largest clique.
We say that a $t$-clique is \itbf{maximal} if
it is not a subclique of any $(t\+ 1)$-clique.
Note that, since every isolated vertex is a complete subgraph,
$E =  \emptyset \Leftrightarrow \omega = 1$.

\begin{defn}
\label{defn-regular-G}
If $\s{y}$ is a regular feasible array,
then its prefix graph $\mathcal{P}_{\s{y}}$ is also said
to be \itbf{regular}.
\end{defn}

We use these ideas to characterize the minimum alphabet size
of any regular string with a given prefix graph $\mathcal{P}$.
%
%
Consider the edges $(i,j),\ i < j$, of regular $\mathcal{P}^-$,
in ascending order of $j$.
Suppose without loss of generality that \s{x}
is defined on the alphabet $\Sigma$ of consecutive positive integers
(so that the ordering of \s{x} is with respect to $\Sigma$).
Figure~\ref{alg-greedy} describes an on-line algorithm ASSIGN
that, from the sorted list of edges in $\mathcal{P}^-$,
computes a lexicographically least string \s{x}
on $t = \omega(\mathcal{P}^-)$ letters whose
prefix graph is $\mathcal{P}$.

\begin{figure}[h!]
{\leftskip=0.5in\obeylines\sfcode`;=3000
\bproc ASSIGN $(\mathcal{P}^-,\s{x})$
Radix sort the edges $(i,j)$, $i < j$, of $\mathcal{P}^-$ by $j$.
$t \la 1;\ N[t] \la 0$
\bfor $j \la 1$ \bto $n$ \bdo
\com{Get all the edges of $\mathcal{P}^-$ with largest vertex $j$.}
\qq $S \la \{(i_1,j),(i_2,j),\ldots,(i_r,j)\}$
\qq \bif $r = 0$ \bthen $\s{x}[j] \la 1$
\com{Thus, if $\mathcal{P}^-$ has no edges, $\s{x} = 1^n$.}
\qq \belse
\qq \com{Determine the least letter $\ell$ that does {\it not} occur}
\qq \com{at any position $i_h$ in $S$; possibly $\ell = t\+ 1$.}
\qq\qq \bfor $h \la 1$ \bto $r$ \bdo $N\big[\s{x}[i_h]\big] \la 1$
\qq\qq $\ell \la 1$
\qq\qq \bwhile $\ell \le t$ \band $N[\ell] = 1$ \bdo $\ell \la \ell\+ 1$
\qq\qq \bif $\ell > t$ \bthen $t \la \ell;\ N[t] \la 0$
\qq\qq \bfor $h \la 1$ \bto $r$ \bdo $N\big[\s{x}[i_h]\big] \la 0$
\qq\qq $\s{x}[j] \la \ell$
\caption{Given the negative prefix graph $\mathcal{P}^-$ of a prefix graph $\mathcal{P}$ known to be regular, compute a lexicographically least string \s{x} on $t = \omega(\mathcal{P}^-)$ letters whose prefix graph is $\mathcal{P}$.}
\label{alg-greedy}
}
\end{figure}

Algorithm ASSIGN maintains a bit vector $N$ that,
for each $j$, specifies the letters $\s{x}[i]$ that have occurred
at positions $(i,j) \in E^-$ --- that is, $N\big[\s{x}[i]\big] = 1$.
Observe that a new letter $t\+ 1$ is added
if and only if vertex $j$ has an edge to vertices representing
{\em all} previous letters $1..t$.
This is true for every $t \ge 1$.
Thus letter $t\+ 1$ is introduced if and only if
there are already $t$ vertices that form a clique in $\mathcal{P}^-$.
Consequently the number of letters used by the algorithm to form \s{x}
is exactly $t = \omega(\mathcal{P}^-)$.
Note also that the letter assigned
at each position $j$ is least with respect to the preceding letters,
whether the letter is a new one in the string or not.
Since the letters are introduced from left to right and never changed,
\s{x} must therefore be lexicographically least with respect to $\mathcal{P}^-$.
Note further that, since position $j$ in the lexicographically least \s{x} is determined
for $j = 1,2,\ldots,n$ based solely on preceding positions $i < j$,
it suffices to use $\mathcal{P}^-$ rather than the augmented $\hat{\mathcal{P}}^-$,
in accordance with Remark~\ref{remk-equiv}.

Next consider the time requirement of Algorithm ASSIGN.
Since we know from Lemma~\ref{lemm-G}(a) that
$\mathcal{P}^-$ has at most $n\- 1$ edges,
it follows that the radix sort can be performed in $O(n)$ time.
For the same reason,
within the \bfor loop, formation of the set $S$ also
has an overall $O(n)$ time requirement.
The processing that updates the bit vector $N$,
in order to determine the least letter $\ell$ to be assigned to $\s{x}[j]$,
requires $\Theta(r)$ time, where $r$ is the size of $S$,
in order to set both $N\big[\s{x}[i_h]\big] \la 1$
and $N\big[\s{x}[i_h]\big] \la 0$;
in addition the \bwhile loop requires $O(r)$ time in the worst case.
Since $|E^-| \le n\- 1$, it follows that
the sum of all $|S| = r$ is $O(n)$,
and so the overall time requirement of this processing is $O(n)$.

\begin{lemm}
\label{lemm-alg}
For a regular prefix graph $\mathcal{P}$ on $n$ vertices, Algorithm ASSIGN
computes in $O(n)$ time a lexicographically least string
on $t = \omega(\mathcal{P}^-)$ letters whose prefix graph is $\mathcal{P}$.
\end{lemm}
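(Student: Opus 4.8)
The plan is to verify four assertions bundled into the statement: that the output \s{x} has prefix graph exactly $\mathcal{P}$; that \s{x} is lexicographically least; that it uses exactly $t=\omega(\mathcal{P}^-)$ letters; and that the running time is $O(n)$. The $O(n)$ bound and lexicographic minimality are already argued in the discussion preceding the lemma, so I would simply invoke that analysis. The substance therefore lies in correctness of the prefix graph and optimality of the alphabet size. I would handle correctness through Lemma~\ref{lemm-easy} by identifying \s{x} with the lexicographically least regular string \s{z} having prefix graph $\mathcal{P}$ (which re-derives lexicographic minimality as a byproduct), and I would handle the alphabet size through the clique number $\omega(\mathcal{P}^-)$.

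For correctness I would verify both conditions of Lemma~\ref{lemm-easy}, reading $\match$ as equality since \s{x} is regular. Condition (b) is immediate by construction: while processing $j$, ASSIGN sets $N\big[\s{x}[i]\big]=1$ for every negative edge $(i,j)$ and then picks $\ell$ with $N[\ell]=0$, so $\s{x}[i]\neq\s{x}[j]$ whenever $(i,j)\in E^-$. Condition (a) is equivalent to \s{x} being constant on each connected component of $\mathcal{P}^+$, that is, $\s{x}[h]=\s{x}[k]$ for every positive edge $(h,k)$. I would prove this by induction on position, showing that \s{x} equals the string \s{z}, whose existence follows from the construction in the ``if'' direction of Lemma~\ref{lemm-G}(c). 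Two structural facts feed the induction. First, by Lemma~\ref{lemm-G}(a) each pair $(1,i)$, $i\ge 2$, is positive or negative, so every component other than vertex $1$'s opens with a vertex that already carries a negative edge to the left; this anchors the base cases. Second, by Lemma~\ref{lemm-G}(b) the positive edges attached to a negative edge align the two matched blocks position by position, which I would use to argue that two positions joined by a positive edge are presented with the same set of previously assigned negative-neighbour letters and hence receive the same least available letter.

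For the alphabet size I would prove both bounds. Since ASSIGN produces a proper colouring of $\mathcal{P}^-$ by condition (b), it uses at least $\omega(\mathcal{P}^-)$ letters. For the reverse bound I would establish a dichotomy on the left-neighbourhood $L_j=\{i:(i,j)\in E^-,\ i<j\}$ of each vertex $j$. For regular $\mathcal{P}$ the borders of $\s{x}[1..j-1]$ are nested, so for $a>a'$ in $L_j$ the shorter border lies inside the longer; writing $q=a-a'+1$ this forces $\s{y}[q]\ge a'-1$, and then either $\s{y}[q]=a'-1$, whence $(a',a)\in E^-$ and $\s{x}[a]\neq\s{x}[a']$, or $\s{y}[q]\ge a'$, whence $\s{x}[a]=\s{x}[a']$ and $(a',a)\notin E^-$. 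Thus within $L_j$ two vertices are $E^-$-adjacent if and only if they carry different letters, so $L_j$ is complete multipartite with the classes of equal letters as its parts. Consequently, whenever ASSIGN introduces a new letter $t\+ 1$ at $j$ --- which happens exactly when all letters $1..t$ occur in $L_j$ --- one representative of each letter together with $j$ forms a $(t\+ 1)$-clique in $\mathcal{P}^-$; hence the algorithm never uses more than $\omega(\mathcal{P}^-)$ letters, and the two bounds coincide.

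The hard part will be condition (a), the constancy of \s{x} on the components of $\mathcal{P}^+$: ASSIGN consults only the negative edges, yet its output must honour the positive-edge structure. The crux is to show that two positions joined by a positive edge present ASSIGN with identical sets of already-assigned negative-neighbour letters, so that the greedy least-letter rule makes matching choices at both. This is precisely the step that fails for non-regular arrays --- as $\s{y_2}=80420311$ illustrates --- and it is where the transitivity of the match on regular strings, hence the nesting of borders guaranteed by Lemma~\ref{lemm-G}(b),(c), is indispensable; I would secure it either by the induction sketched above or, equivalently, by a minimal-counterexample argument on the least larger endpoint of a violated positive edge.
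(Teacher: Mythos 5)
Your proposal is correct and follows the same basic strategy as the paper's proof --- an induction on position $j$ in which the negative edges into $j$ are respected by construction and the real work is showing that the greedy least-letter rule automatically honours the positive edges --- but it differs in two respects worth noting. First, where the paper argues positive-edge consistency in the language of borders of $\s{x}[1..j\- 1]$ that are or are not ``continued'' to $\s{x}[1..j]$, you argue it as constancy on the connected components of $\mathcal{P}^+$, anchored by Lemma~\ref{lemm-G}(a),(b); these are equivalent formulations, and both your sketch and the paper's proof leave the decisive step (that two positions joined by a positive edge see identical sets of excluded letters, hence receive the same least letter) at roughly the same level of detail. Second, and more substantively, your treatment of the alphabet size is genuinely stronger than the paper's: the paper asserts in the discussion preceding the lemma that a new letter $t\+ 1$ is introduced ``if and only if there are already $t$ vertices that form a clique in $\mathcal{P}^-$,'' without explaining why the representatives of the $t$ distinct letters among the left-neighbours of $j$ are pairwise adjacent in $\mathcal{P}^-$. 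Your dichotomy --- that for $a > a'$ in $L_j$ the nesting of borders forces $\s{y}[a\- a'\+ 1] \ge a'\- 1$, whence $(a',a)$ is a negative edge exactly when $\s{x}[a] \ne \s{x}[a']$, so that $L_j$ is complete multipartite with the letter classes as parts --- supplies precisely this missing justification and cleanly yields $t = \omega(\mathcal{P}^-)$ as a two-sided bound. In short, your route buys a rigorous proof of the clique-number claim that the paper takes for granted, at the cost of still owing the same deferred argument for positive-edge consistency that the paper also only sketches.
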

\begin{proof}
We need to show that the string \s{x} computed by the algorithm
is indeed consistent with $\mathcal{P}$
(that is, by Remark~\ref{remk-equiv}, the corresponding prefix array \s{y}).
Observe that $S$ is always empty for $j = 1$,
so that therefore the initial assignment $\s{x}[1] \la 1$
is consistent with the subgraph $\mathcal{P}_1$ on a single vertex.
Suppose then that $\s{x}[1..j\- 1]$ has been computed by ASSIGN
for some $j \in 2..n$ so as to be consistent with with the
subgraph $\mathcal{P}_{j-1}$ on vertices $1,2,\ldots,j\- 1$.
For the addition of vertex (position) $j$, there are three possibilities:
\begin{description}
\item[\boldmath $|S| = 0$.]
In this case, $\s{x}[j] \la 1$,
the least letter, so that $\s{x}[j] = \s{x}[1]$,
and therefore
$\s{x}[1..j]$ remains consistent with $\mathcal{P}^-_j = \mathcal{P}^-_{j-1}$.
\item[{\bf $S$ gives rise to $t$ distinct letters.}]
Here $\s{x}[j] \la t\+ 1$, a new letter.
Since this is the first occurrence of $t\+ 1$ in \s{x},
and since there is no alternative, therefore
$\s{x}[1..j]$ is again consistent with $\mathcal{P}_j$
and has only the empty border.
\item[{\bf $S$ gives rise to $t' < t$ distinct letters.}]
From the set $S$ we know that $\s{x}[1..j\- 1]$ has exactly $r$ borders
not continued to $\s{x}[1..j]$.
The longest of these borders is $\s{x}[1..i_r\- 1]$.
There may be a border of $\s{x}[1..j\- 1]$
that is on the other hand actually continued to $\s{x}[1..j]$.
If not, then the assignment $\s{x}[j] \la \ell$ is consistent with $\mathcal{P}_j$,
where $\ell$ is the least letter not precluded by $S$.
Suppose then that there exists a border $\s{x}[1..i] = \s{x}[j\- i\+ 1..j]$,
$i \ge 1$.
Note that while there may be more than one such border, $\s{x}[i]$
must be the same for each one, since we suppose that \s{x} is regular.
Furthermore, $\s{x}[i]$ was chosen by the algorithm
to be a minimum letter $\ell_i$ with respect to the prefix $\s{x}[1..i\- 1]$;
since $\s{x}[j\- i\+ 1..j\- 1] = \s{x}[1..i\- 1]$,
the choice of a minimum letter with respect to $\s{x}[1..j\- 1]$
must yield $\ell_j = \ell_i$,
hence also consistent with $\mathcal{P}_j$.
\end{description}
Therefore by induction the lexicographically least string $\s{x}[1..j]$
is consistent with $\mathcal{P}_j$.
We have argued above that \s{x} is lexicographically least,
also that the time requirement of the algorithm is $O(n)$.
Thus the lemma is proved.  \qed
\end{proof}

Notice that the alphabet size determined by ASSIGN is least possible,
given $\mathcal{P}$.
Instead of assigning letters to positions in \s{x},
we could just as well have labelled vertices of $\mathcal{P}$
with these letters;
thus we have

\begin{cor}
\label{cor-clique}
The class of regular negative prefix graphs
$\mathcal{P}^-$ has the property that the chromatic number $($minimum alphabet size$)$
$\chi(\mathcal{P}^-) = \omega(\mathcal{P}^-)$
for every graph in the class.
\end{cor}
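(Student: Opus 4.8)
The plan is to read off the letters of a regular string consistent with $\mathcal{P}$ as a vertex-colouring of $\mathcal{P}^-$, and then to sandwich $\chi(\mathcal{P}^-)$ between the universal lower bound and the upper bound supplied by Algorithm ASSIGN via Lemma~\ref{lemm-alg}. The lower bound is the trivial half: in any graph the $\omega(\mathcal{P}^-)$ vertices of a maximum clique must receive pairwise distinct colours under any proper colouring, so $\chi(\mathcal{P}^-) \ge \omega(\mathcal{P}^-)$ holds without appeal to the prefix-graph structure.

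Next I would set up the dictionary between strings and colourings. A labelling of the vertices $1..n$ of $\mathcal{P}$ by letters is a proper colouring of $\mathcal{P}^-$ exactly when the two endpoints of every negative edge carry distinct labels. So let $\s{x}$ be any regular string whose prefix graph is $\mathcal{P}$, and assign to each vertex $i$ the colour $\s{x}[i]$. By Lemma~\ref{lemm-easy}(b), every negative edge $(p,q)$ of the prefix graph of $\s{x}$ satisfies $\s{x}[p] \not\match \s{x}[q]$, and since $\s{x}$ is regular, matching coincides with equality, so $\s{x}[p] \ne \s{x}[q]$. Hence this labelling is a proper colouring of $\mathcal{P}^-$, using a number of colours equal to the number of distinct letters of $\s{x}$. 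Note that positive edges impose no constraint here, as they are not edges of $\mathcal{P}^-$; they merely force additional equalities within $\s{x}$, which is harmless for the colouring count.

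Finally I would invoke Lemma~\ref{lemm-alg}: for a regular prefix graph $\mathcal{P}$, Algorithm ASSIGN produces a string $\s{x}$ consistent with $\mathcal{P}$ that uses exactly $t = \omega(\mathcal{P}^-)$ distinct letters. Reading these letters as colours therefore yields a proper colouring of $\mathcal{P}^-$ with $\omega(\mathcal{P}^-)$ colours, whence $\chi(\mathcal{P}^-) \le \omega(\mathcal{P}^-)$. Combining this with the lower bound gives $\chi(\mathcal{P}^-) = \omega(\mathcal{P}^-)$, as required.

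The genuine content is thus entirely absorbed into Lemma~\ref{lemm-alg} --- namely that ASSIGN never introduces a new letter unless forced to by an already-present $t$-clique in $\mathcal{P}^-$ --- so within the corollary itself the only step needing care is the translation: verifying that a negative edge really forces two distinct colours. This is where regularity is indispensable, since it is precisely regularity that upgrades the non-match $\s{x}[p] \not\match \s{x}[q]$ guaranteed along negative edges into the strict inequality $\s{x}[p] \ne \s{x}[q]$ demanded by a proper colouring. I therefore expect no obstacle beyond making this correspondence explicit and citing Lemma~\ref{lemm-alg} for the upper bound.
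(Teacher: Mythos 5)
Your proposal is correct and follows essentially the same route as the paper, which derives the corollary directly from Lemma~\ref{lemm-alg} by observing that the letters ASSIGN places at positions of \s{x} can equally be read as vertex labels (colours) of $\mathcal{P}^-$, giving a proper colouring with $\omega(\mathcal{P}^-)$ colours against the trivial lower bound $\chi \ge \omega$. Your explicit check that regularity turns the non-match along a negative edge into a strict inequality of letters is exactly the translation the paper leaves implicit.
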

This property does not hold in general;
in \cite{M55}, for example,
it is shown that there exist triangle-free graphs $\mathcal{G}$
($\omega(\mathcal{G}) = 2$)
with arbitrarily large chromatic number.

To get a sense of the labelling, consider the following regular prefix array
$$ \begin{array}{rcccccccccccccccccccc}
 & \scriptstyle 1 & \scriptstyle 2 & \scriptstyle 3 & \scriptstyle 4 & \scriptstyle 5 & \scriptstyle 6 & \scriptstyle 7 & \scriptstyle 8 & \scriptstyle 9 & \scriptstyle 10 & \scriptstyle 11  & \scriptstyle 12 & \scriptstyle 13 & \scriptstyle 14 & \scriptstyle 15 & \scriptstyle 16 & \scriptstyle 17 & \scriptstyle 18 & \scriptstyle 19 & \scriptstyle 20 \\ 
\s{y} = & 20 & 0 & 1 & 0 & 3 & 0 & 3 & 0 & 3 & 0 & 1 & 0 & 7 & 0 & 1 & 0 & 4 & 0 & 1 & 0
\end{array}$$
whose corresponding $\mathcal{P}^-_{\s{y}}$ has edges
(sorted as in Algorithm ASSIGN)
\begin{eqnarray*}
& & (1,2),\ (1,4),\ (2,4),\ (1,6),\ (1,8),\ (4,8),\ (1,10),\ (4,10), \\ 
& & (1,12),\ (2,12),\ (4,12),\ (1,14),\ (1,16),\ (2,16),\ (1,18), \\ 
& & (1,20),\ (2,20),\ (8,20).
\end{eqnarray*}
$\mathcal{P}^-_{\s{y}}$ has a single maximal clique, $(1,2,4,12)$, on four vertices,
and the corresponding
lexicographically least string is
$$\s{y} = abacabababadabacabac.$$
Note that $\hat{\mathcal{P}}^-_{\s{y}}$ contains in addition
the edge $(5,21)$ not required for the lexicographically least \s{x}.

Now consider $t$-cliques $\{i_1,i_2,\ldots,i_t\}$ (not necessarily maximal)
in regular prefix arrays $\mathcal{P}^-$ for which $i_1 = 1$,
together with regular strings \s{x} whose prefix graph is $\mathcal{P}$.
A 1-clique corresponds to a prefix $\s{p_1} = \lambda_1$ of \s{x},
where $\lambda_1$ is some (say, smallest) letter.
Then for every 2-clique $(1,i_2)$ in $\mathcal{P}^-$,
there must exist a corresponding prefix
\s{p_2} of \s{x} such that
$$\s{p_2} = \lambda_1\s{w_1}\lambda_2,$$
where $\lambda_2 > \lambda_1$.
Similarly, for every 3-clique $(1,i_2,i_3)$
in $\mathcal{P}^-$,
there exists a corresponding prefix \s{p_3} of \s{x} such that
\begin{eqnarray*}
\s{p_3} & = & \lambda_1\s{w_1}\lambda_2\s{w_2}\lambda_1\s{w_1}\lambda_3 \\ 
& = & \s{p_2}\s{w_2}\s{p'_2},
\end{eqnarray*}
where $\s{p_2},\s{p'_2}$ are identical but for
distinct rightmost letters $\lambda_2$ and $\lambda_3 > \lambda_2$, respectively.
In general, for every $t$-clique
$(1,i_2,i_3,\ldots,i_t)$ in $\mathcal{P}^-$,
there exists a corresponding prefix \s{p_t} of \s{x} such that
$$\s{p_t} = \s{p_{t-1}}\s{w_{t-1}}\s{p'_{t-1}},$$
where $\s{p_{t-1}},\s{p'_{t-1}}$ are prefixes identical
but for rightmost letters $\lambda_{t-1}$ and $\lambda_t > \lambda_{t-1}$, respectively.
Thus every $t$-clique in regular $\mathcal{P}^-$ corresponds to
a prefix of the corresponding string \s{x} that has $t\- 1$ borders
of lengths $1,2,\ldots,t\-1$.
The length of this prefix can be minimized by choosing
every $\s{w_j}$, $j \in 1..t\- 1$, to be empty, so that
the strings $\s{p_j}$ double in length at each step:
hence there exists a prefix graph on $2^{t-1}$ vertices
(or, equivalently, a feasible array of length $2^{t-1}$)
whose corresponding strings cannot be implemented on less than $t$ letters.
Thus we are able to verify a result given in \cite[Proposition 8]{CCR09}:

\begin{lemm}
\label{lemm-log}
For a given regular feasible array $\s{y} = \s{y}[1..n]$, a regular string \s{x} whose prefix array is $\s{y}$ can be constructed using at most $\floor{\log_2 n}\+ 1$ letters.
\end{lemm}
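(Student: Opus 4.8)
The plan is to reduce the statement to a bound on the clique number of the negative prefix graph and then establish that bound by a doubling argument on first occurrences. Since \s{y} is regular, Lemma~\ref{lemm-alg} guarantees that Algorithm ASSIGN constructs, on the alphabet $1,2,\ldots$, a regular string \s{x} with prefix array \s{y} using exactly $t = \omega(\mathcal{P}^-)$ letters. Hence it suffices to prove that $\omega(\mathcal{P}^-) \le \floor{\log_2 n}\+ 1$, equivalently that $n \ge 2^{t-1}$.

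First I would set up notation for the string produced by ASSIGN. Let $n_c$ denote the position of the \emph{first} occurrence of the $c$-th letter, so that $1 = n_1 < n_2 < \cdots < n_t$, the letters being introduced in increasing order from left to right. The heart of the proof is the claim that consecutive first occurrences at least double:
\begin{equation}
\label{eq-double-plan}
n_c \ge 2\,n_{c-1} \qquad (2 \le c \le t).
\end{equation}
Granting (\ref{eq-double-plan}), a trivial induction gives $n \ge n_t \ge 2^{t-1}n_1 = 2^{t-1}$, whence $t\- 1 \le \log_2 n$ and $t \le \floor{\log_2 n}\+ 1$, which is what we want.

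To prove (\ref{eq-double-plan}) I would exploit the way ASSIGN introduces a new letter: a fresh letter is used at a position exactly when every earlier letter is blocked there by a negative edge. Thus when letter $c$ is first assigned, at position $n_c$, there is in particular a position $i$ with $n_{c-1} \le i < n_c$, with $\s{x}[i]$ equal to the letter $c\- 1$, and with $(i,n_c) \in E^-$. By Lemma~\ref{lemm-G}(b) this negative edge forces $\s{x}[1..i\- 1] = \s{x}[n_c\- i\+ 1..n_c\- 1]$; that is, the prefix $\s{x}[1..n_c\- 1]$ has a border of length $i\- 1$ and hence the period $\pi = n_c\- i$. Now if $n_{c-1} > \pi$, periodicity would give $\s{x}[n_{c-1}\- \pi] = \s{x}[n_{c-1}]$, placing the letter $c\- 1$ at the strictly earlier position $n_{c-1}\- \pi \ge 1$ and contradicting the choice of $n_{c-1}$ as its \emph{first} occurrence. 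Hence $n_{c-1} \le \pi = n_c\- i$, and since $i \ge n_{c-1}$ we obtain $n_c \ge i\+ n_{c-1} \ge 2\,n_{c-1}$, which is (\ref{eq-double-plan}).

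The routine pieces --- that ASSIGN introduces letters left to right, and that it uses a new letter precisely when all previous letters are precluded by negative edges --- are immediate from the description preceding Lemma~\ref{lemm-alg}, so I would state them without fuss. I expect the genuine obstacle to be the periodicity step. The intuitive ``doubling'' picture is the case in which the two bordered copies arising from a $t$-clique are non-overlapping; in general they may overlap (for example $\s{y} = 603010$ has the clique $\{1,4,6\}$ in $\mathcal{P}^-$ with third vertex $6 < 2\cdot 4$), so a bound phrased through an arbitrary clique must still handle overlap. Routing the argument through first occurrences rather than through an arbitrary clique is exactly what lets the periodicity observation absorb the overlapping case cleanly, and making that reduction airtight --- in particular the border/period bookkeeping and the degenerate case $c = 2$, where the relevant border is empty and $n_2 \ge 2$ holds trivially --- is the step I would treat most carefully.
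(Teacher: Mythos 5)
Your proof is correct, but it takes a genuinely different route from the paper's. The paper derives the bound from clique combinatorics: it argues that every $t$-clique $(1,i_2,\ldots,i_t)$ in regular $\mathcal{P}^-$ forces a nested sequence of prefixes $\s{p_1},\ldots,\s{p_t}$ with $\s{p_j} = \s{p_{j-1}}\s{w_{j-1}}\s{p'_{j-1}}$, where $\s{p_{j-1}}$ and $\s{p'_{j-1}}$ differ only in their last letters, so that $|\s{p_j}| \ge 2|\s{p_{j-1}}|$ and hence $n \ge 2^{t-1}$; combined with Algorithm ASSIGN using $\omega(\mathcal{P}^-)$ letters, the lemma follows. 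You instead run the doubling on the \emph{first occurrences} $n_1 < n_2 < \cdots < n_t$ of the letters in the output of ASSIGN, and prove $n_c \ge 2n_{c-1}$ by a border/period argument. The substantive difference is exactly the one you flag: the paper's decomposition $\s{p_{j-1}}\s{w_{j-1}}\s{p'_{j-1}}$ tacitly assumes the two bordered copies are non-overlapping, and your example $\s{y} = 603010$ (regular, with string $ababac$ and the clique $\{1,4,6\}$ in $\mathcal{P}^-$, where $6 < 2\cdot 4$) shows that assumption can fail, so the paper's concatenation picture does not literally hold for an arbitrary clique even though the numerical conclusion survives. Your periodicity step --- a negative edge $(i,n_c)$ gives the prefix $\s{x}[1..n_c\- 1]$ a border of length $i\- 1$, hence period $n_c\- i$, which would propagate the letter $c\- 1$ to a position earlier than $n_{c-1}$ unless $n_{c-1} \le n_c\- i$ --- absorbs the overlapping case cleanly and is verified correctly (including the degenerate case $c=2$). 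In short: the paper's argument is shorter but informal and has a genuine gap at the overlap; yours is routed through the algorithm's output rather than through cliques, costs one extra periodicity observation, and is airtight. Both yield the same bound $t \le \floor{\log_2 n}\+ 1$.
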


\cite{CCR09} describes a lemma more complex than Algorithm ASSIGN,
but that does not require a regular prefix array as input:
a nonregular feasible array is rejected at the first position detected.

We conclude this section with two equivalent
necessary and sufficient conditions for \s{y} to be regular.
A string \s{x} is said to be \itbf{strongly indeterminate}
(INDET, for short) if and only if its prefix array is not regular.
Recall from Definition~\ref{defn-feasible} that a feasible array is regular
if and only if it is a prefix array of a regular string.
Thus, for example, the string (\ref{reg}),
although certainly indeterminate, is not INDET because it is consistent
with the feasible array $\s{y} = 80103010$ that is a prefix array of the
regular string $\s{x} = abacabad$.
If on the other hand \s{y} is not regular, then as we have seen
(Lemma~\ref{lemm-G}(c)) there must exist
a position $i$ such that $\s{x}[i] \match \s{x}[r]$ and
$\s{x}[i] \match \s{x}[s]$, while $\s{x}[r] \not\match \s{x}[s]$,
for some positions $r$ and $s$;
in such a case we say that $\s{x}[i]$ is \itbf{INDET}.
(In terms of the prefix graph $\mathcal{P}$,
$(i,r) \in E^+,\ (i,s) \in E^+,\ (r,s) \in E^-$.)

We state two versions of what is essentially the same lemma;
we prove the second.

\begin{lemm}
\label{lemm-indet}
Suppose that $\s{x} = \s{x}[1..n]$ is a nonempty string
with prefix array $\s{y}$.
Then for $i \in 1..n$, $\s{x}[i]$ is INDET (and so therefore also \s{x}) if and only if
there exist positions $r$ and $s > r$ such that
$\s{y}[s\- r\+ 1] = r\- 1$
and one of the following holds:
\begin{itemize}
\item[(a)]
$\s{y}[r\- i\+ 1] \ge i,\ \s{y}[s\- i\+ 1] \ge i$ $(1 \le i < r < s \le n)$;
\item[(b)]
$\s{y}[i\- r\+ 1] \ge r,\ \s{y}[s\- i\+ 1] \ge i$ $(1 \le r < i < s \le n)$;
\item[(c)]
$\s{y}[i\- r\+ 1] \ge r,\ \s{y}[i\- s\+ 1] \ge s$ $(1 \le r < s < i \le n)$.
\end{itemize}
\end{lemm}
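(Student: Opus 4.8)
The plan is to read the definition of an INDET position entirely through the prefix graph, using the parenthetical translation stated just before the lemma: $\s{x}[i]$ is INDET precisely when there are positions $r,s$ with $(i,r)\in E^+$, $(i,s)\in E^+$ and $(r,s)\in E^-$. Since membership in $E^-$ is symmetric and every edge is a two-element set, I may assume $r<s$, and then $i\notin\{r,s\}$ automatically (otherwise one of $(i,r),(i,s)$ would coincide with $(r,s)\in E^-$, contradicting the disjointness $E^+\cap E^-=\emptyset$ from Lemma~\ref{lemm-G}(a)). Thus the whole argument reduces to re-expressing these three edge-membership statements as conditions on \s{y}; the ``if and only if'' then comes for free, because each translation step is itself an equivalence. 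The parenthetical clause ``and so therefore also \s{x}'' follows from Lemma~\ref{lemm-G}(c): the two positive edges exhibit a path from $r$ to $s$ through $i$ in $\mathcal{P}^+$, placing the endpoints of a negative edge in a common connected component, so \s{y} cannot be regular.

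First I would record the two translation tools. The negative edge is handled directly by Lemma~\ref{lemm-G}(b): for $r<s$, $(r,s)\in E^-$ if and only if $\s{y}[s\- r\+ 1] = r\- 1$, which is exactly the standing hypothesis of the lemma. For positive edges I would extract from Definition~\ref{defn-G}(a) the characterization that, for any $p<q$, $(p,q)\in E^+$ if and only if $\s{y}[q\- p\+ 1]\ge p$; indeed the positive edges ``belonging to'' index $q\- p\+ 1$ are exactly the pairs $(h,(q\- p\+ 1)\+ h\- 1)$ for $h\in 1..\s{y}[q\- p\+ 1]$, so $(p,q)$ is one of them iff $p\le \s{y}[q\- p\+ 1]$.

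The remaining work is a three-way case split according to where $i$ falls relative to $r<s$, applying the positive-edge characterization with the correct orientation in each case. When $i<r<s$, both edges point upward, so $(i,r)\in E^+ \Leftrightarrow \s{y}[r\- i\+ 1]\ge i$ and $(i,s)\in E^+ \Leftrightarrow \s{y}[s\- i\+ 1]\ge i$, giving (a). When $r<i<s$, the edge to $r$ is $(r,i)$, whence $(i,r)\in E^+ \Leftrightarrow \s{y}[i\- r\+ 1]\ge r$, while $(i,s)\in E^+ \Leftrightarrow \s{y}[s\- i\+ 1]\ge i$, giving (b). When $r<s<i$, both edges point downward, yielding $\s{y}[i\- r\+ 1]\ge r$ and $\s{y}[i\- s\+ 1]\ge s$, giving (c). Since $i\notin\{r,s\}$ these three orderings are exhaustive, so the conjunction of the negative-edge hypothesis $\s{y}[s\- r\+ 1]=r\- 1$ with the appropriate one of (a)--(c) is equivalent to the existence of the INDET configuration at $i$, establishing both directions simultaneously.

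I do not expect a deep obstacle here; the content is a faithful bookkeeping of edge orientations. The one place to be careful is the positive-edge characterization $(p,q)\in E^+\Leftrightarrow \s{y}[q\- p\+ 1]\ge p$, getting the boundary value $p=1$ right (this is what lets case (a) include $i=1$, since $\s{x}[1]$ is compared against the prefix itself), and then consistently identifying, in each of the three orderings, which endpoint plays the role of the smaller vertex $p$ before invoking it. Once that orientation is pinned down, the match between the derived inequalities and the stated conditions (a), (b), (c) is immediate, and the equivalence is complete.
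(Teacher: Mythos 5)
Your proof is correct, and its skeleton --- the three-way case split on the position of $i$ relative to $r<s$, followed by a mechanical translation of each adjacency into an inequality on \s{y} --- is the same as the paper's. The differences are worth noting, though. The paper does not prove Lemma~\ref{lemm-indet} directly: it proves the restated version, Lemma~\ref{lemm-indeta}, starting from the match formulation ``$\s{x}[i]\match\s{x}[r']$, $\s{x}[i]\match\s{x}[s']$, $\s{x}[r']\not\match\s{x}[s']$'' and recovering the conditions of Lemma~\ref{lemm-indet} by the substitutions recorded at the end of each case. You instead take the parenthetical prefix-graph reading ($(i,r)\in E^+$, $(i,s)\in E^+$, $(r,s)\in E^-$) as the working definition and prove Lemma~\ref{lemm-indet} itself, via the two clean equivalences $(p,q)\in E^+\Leftrightarrow\s{y}[q\- p\+ 1]\ge p$ and $(r,s)\in E^-\Leftrightarrow\s{y}[s\- r\+ 1]=r\- 1$ extracted from Definition~\ref{defn-G} and Lemma~\ref{lemm-G}(b). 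This buys you something real: the paper's step ``$\s{x}[i]\match\s{x}[r']$ and $i<r'$ implies $\s{x}[1..i]\match\s{x}[r'\- i\+ 1..r']$'' is not a valid inference from a single-position match (e.g.\ in $\s{x}=c\{a,b\}ab$ one has $\s{x}[2]\match\s{x}[3]$, $\s{x}[2]\match\s{x}[4]$, $\s{x}[3]\not\match\s{x}[4]$, yet $\s{y}=4000$ is regular and none of (a)--(c) holds at $i=2$), so the lemma is only true under the edge-based reading you adopted; your version makes each translation an honest ``iff'' and also justifies, via the disjointness $E^+\cap E^-=\emptyset$, why $i$, $r$, $s$ are pairwise distinct --- a point the paper's case analysis assumes silently.
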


\begin{figure}[t]
  \centering
  \raisebox{.6cm}{(a) \ }
  \includegraphics[viewport = 5cm 23cm 16cm 25cm,scale=0.9]{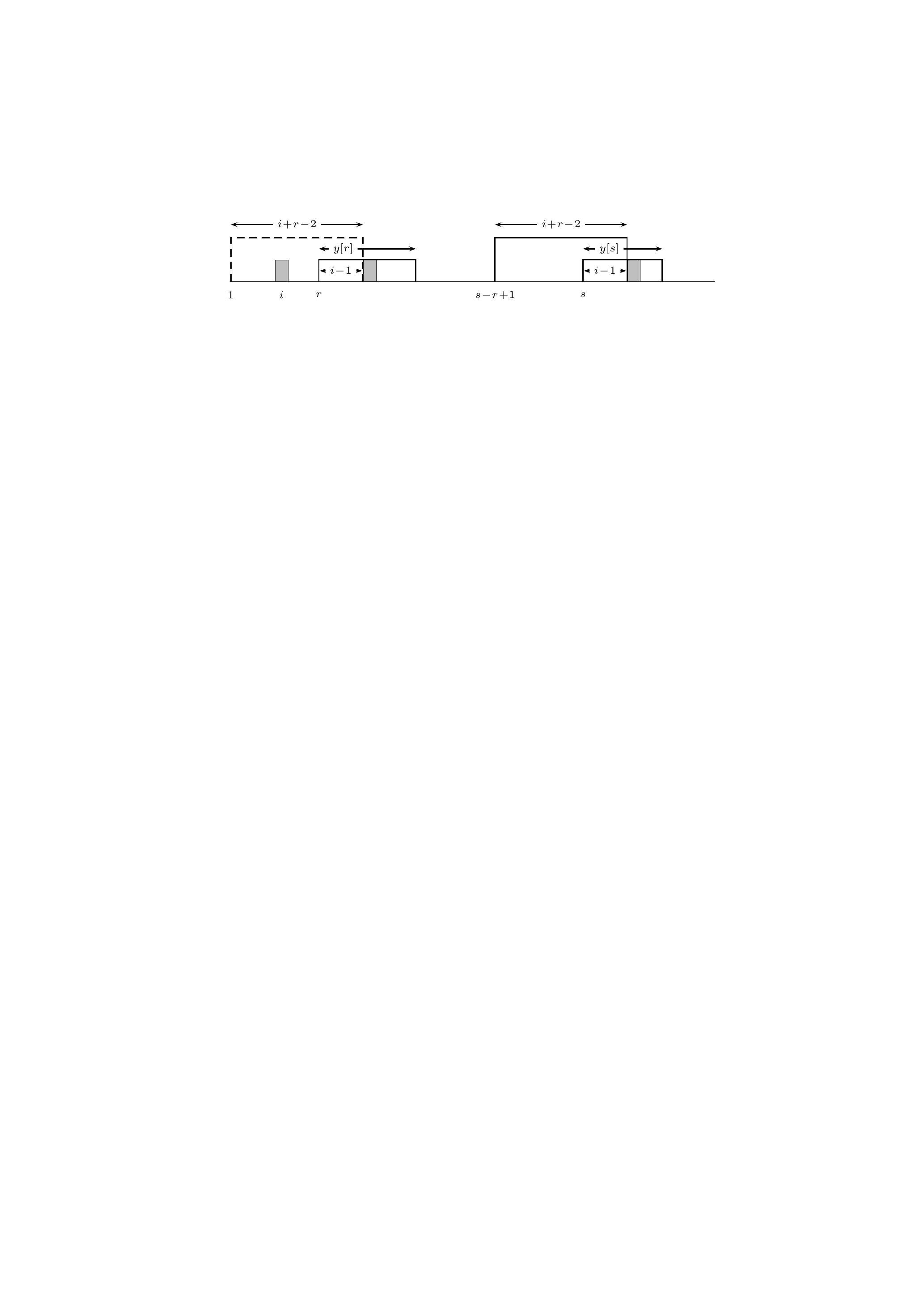}\\
  \raisebox{.6cm}{(b) \ }
  \includegraphics[viewport = 5cm 23cm 14cm 25cm,scale=0.9]{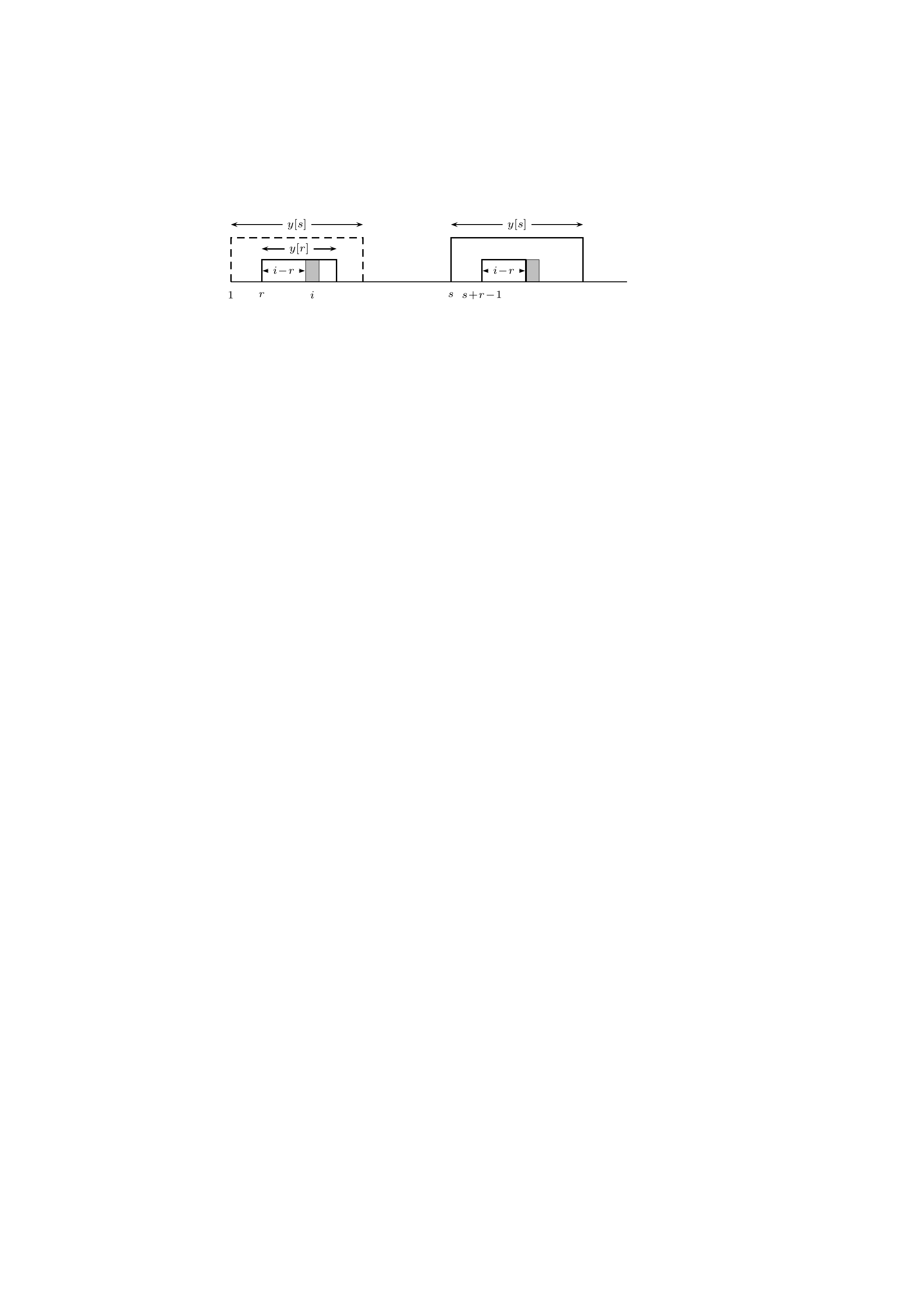}\\
  \raisebox{.6cm}{(c) \ }
  \includegraphics[viewport = 5cm 23cm 14cm 25cm,scale=0.9]{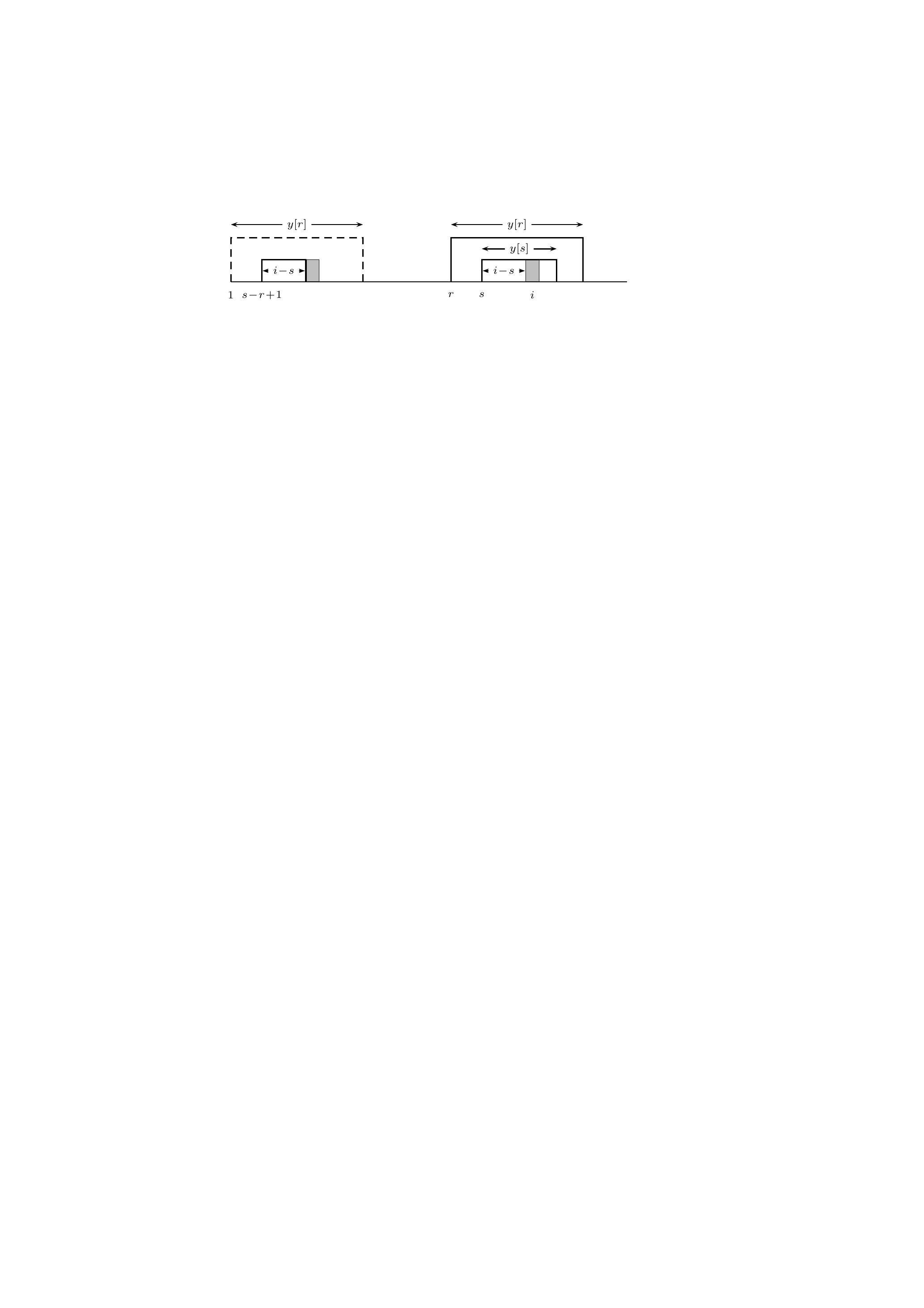}\\
  \caption{The three cases of Lemma~\ref{lemm-indeta}.}
  \label{fig-3cases}
\end{figure}

\begin{lemm}
\label{lemm-indeta}
Suppose that $\s{x} = \s{x}[1..n]$ is a nonempty string
with prefix array $\s{y}$.
Then for $i \in 1..n$, $\s{x}[i]$ is INDET (and so therefore also \s{x}) if and only if
there exist positions $r$ and $s$ such that
one of the following holds:
\begin{itemize}
\item[(a)]
$\s{y}[r] \ge i,\ \s{y}[s] \ge i,\ \s{y}[s\- r\+ 1] = i\+ r\- 2$;
\item[(b)]
$r\+ \s{y}[r] > i,\ \s{y}[s] \ge i,\ \s{y}[s\+ r\- 1] = i\- r$;
\item[(c)]
$r\+ \s{y}[r] > i,\ s\+ \s{y}[s] > i,\ \s{y}[s\- r\+ 1] = i\- s$.
\end{itemize}
\end{lemm}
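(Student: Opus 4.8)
The plan is to work entirely with the graph characterization established just before the statement: $\s{x}[i]$ is INDET if and only if there exist positions $p$ and $q$ with $(i,p)\in E^+$, $(i,q)\in E^+$ and $(p,q)\in E^-$ in the prefix graph $\mathcal{P}=\mathcal{P}_{\s{y}}$. The whole content of the lemma is then a faithful translation of these three edge memberships into inequalities on $\s{y}$. The two translation rules I will use repeatedly come straight from Definition~\ref{defn-G} and Lemma~\ref{lemm-G}(b): for $a<b$, we have $(a,b)\in E^+$ exactly when $\s{y}[b-a+1]\ge a$ (Definition~\ref{defn-G}(a)), and $(a,b)\in E^-$ exactly when $\s{y}[b-a+1]=a-1$ and $b\le n$ (Definition~\ref{defn-G}(b) together with Lemma~\ref{lemm-G}(b)). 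Because both rules are phrased with the smaller endpoint first, the inequality I obtain for a positive edge incident to $i$ depends on whether $i$ is the smaller or the larger endpoint; this is precisely what forces the three-case split and explains the asymmetry between the conditions $\s{y}[r]\ge i$ and $r+\s{y}[r]>i$ that appear in the statement.

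For the forward direction I would assume $\s{x}[i]$ is INDET, fix a witness, and take the negative edge to be $(p,q)$ with $p<q$. Since $(i,p)$ and $(i,q)$ are edges, $i,p,q$ are automatically distinct, so exactly one of $i<p<q$, $p<i<q$, $p<q<i$ holds. In the first case both positive edges have $i$ as their smaller endpoint; setting $r=p-i+1$ and $s=q-i+1$ (so that $p=i+r-1$, $q=i+s-1$, $r<s$) and applying the two rules turns $(i,p),(i,q)\in E^+$ into $\s{y}[r]\ge i$ and $\s{y}[s]\ge i$, and $(p,q)\in E^-$ into $\s{y}[s-r+1]=i+r-2$, which is case~(a). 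In the second case $(p,i)$ has $p$ smaller while $(i,q)$ has $i$ smaller; with $r=i-p+1$, $s=q-i+1$ the positive edges become $r+\s{y}[r]>i$ and $\s{y}[s]\ge i$ and the negative edge becomes $\s{y}[s+r-1]=i-r$, which is case~(b). In the third case both positive edges have $i$ as the larger endpoint; with $r=i-q+1$, $s=i-p+1$ (chosen so that $r<s$) the same substitution yields $r+\s{y}[r]>i$, $s+\s{y}[s]>i$ and $\s{y}[s-r+1]=i-s$, which is case~(c). Each of these is a mechanical change of variables once the correct endpoint ordering has been identified.

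The reverse direction runs the same substitutions backwards: given $r,s$ satisfying (a), (b) or (c), I define $p,q$ by inverting the formulas above and verify that $(i,p),(i,q)\in E^+$ and $(p,q)\in E^-$, so that $\s{x}[i]$ is INDET by definition. Here the feasibility of $\s{y}$ (Definition~\ref{defn-feasible}), in the form $\s{y}[k]\le n-k+1$, does real work: in case~(a), for instance, it forces $q=i+s-1\le n$, so that the negative edge $(p,q)$ genuinely lies within $1..n$ as Definition~\ref{defn-G}(b) requires, and it also rules out the degenerate values $r=1$ or $s=r$ that would collapse two of the three positions (e.g.\ $r=1$ would force $\s{y}[s]=i-1$ and $\s{y}[s]\ge i$ at once, while $s=r$ would force $\s{y}[1]=n=i+r-2$, contradicting $\s{y}[r]\ge i$). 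The analogous degeneracies in (b) and (c) are excluded the same way, guaranteeing that $i,p,q$ are truly distinct and hence form a valid INDET witness.

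The routine part is the arithmetic of the offsets, which is tedious but forced. I expect the only genuinely delicate point to be the bookkeeping of which endpoint is the smaller one in each positive edge, since getting this wrong silently interchanges the ``$\ge i$'' and ``$>i$'' inequalities and mislabels the cases. A secondary point demanding care is checking, via feasibility, that the reconstructed negative edge stays within bounds and that no two of $i,p,q$ coincide; this is exactly what upgrades the equivalence from a one-way implication to an \emph{if and only if}, and it is where I would concentrate the most attention.
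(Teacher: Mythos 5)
Your proof is correct and follows essentially the same route as the paper's: the same three-way split on the relative order of $i$ and the two witness positions, and the same changes of variables for $r$ and $s$, with your $E^+/E^-$ translation rules being exactly the paper's parenthetical graph reading of the INDET condition combined with Definition~\ref{defn-G} and Lemma~\ref{lemm-G}(b). The only substantive difference is that you spell out the converse direction (in-bounds checks and non-degeneracy of the reconstructed positions via feasibility), which the paper asserts in a single sentence without detail.
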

\begin{proof}
If $\s{x}[i]$ is INDET, then there must exist positions $r'$ and $s'$ such that
$\s{x}[i] \match \s{x}[r'],\ \s{x}[i] \match \s{x}[s'],\ \s{x}[r'] \not\match \s{x}[s']$.
Conversely, if such $r'$ and $s'$ exist, then $\s{x}[i]$ is INDET.
Without loss of generality, suppose that $s' > r'$.
Then three cases arise depending on the relative values of the distinct integers $i,r',s'$
(see Figure~\ref{fig-3cases}):
\begin{itemize}
\item[(a)]
$(1 \le i < r' < s' \le n)$
Since $\s{x}[i] \match \s{x}[r']$ and $i < r'$,
it follows that $\s{x}[1..i] \match \s{x}[r'\- i\+ 1..r']$,
hence that $\s{y}[r'\- i\+ 1] \ge i$;
similarly, $\s{y}[s'\- i\+ 1] \ge i$.
Since $\s{x}[r'] \not\match \s{x}[s']$ and $r' < s'$,
therefore $\s{y}[s'\- r'\+ 1] = r'\- 1$.
Setting $r \la r'\- i\+ 1,\ s \la s'\- i\+ 1$ yields the desired result.
\item[(b)]
$(1 \le r' < i < s' \le n)$
Since $\s{x}[i] \match \s{x}[r']$ and $r' < i$,
therefore $\s{x}[1..r'] \match \s{x}[i\- r'\+ 1..i]$,
and so $\s{y}[i\- r'\+ 1] \ge r'$;
as in (a), $\s{y}[s'\- i\+ 1] \ge i$.
Also as in (a), $\s{y}[s'\- r'\+ 1] = r'\- 1$.
Setting $r \la i\- r'\+ 1,\ s \la s'\- i\+ 1$ yields the result.
\item[(c)]
$(1 \le r' < s' < i \le n)$
As in (b), $\s{y}[i\- r'\+ 1] \ge r'$;
similarly, $\s{y}[i\- s'\+ 1] \ge s'$.
As in (a) and (b), $\s{y}[s'\- r'\+ 1] = r'\- 1$.
Setting $s \la i\- r'\+ 1,\ r \la i\- s'\+ 1$ yields the result.  \qed
\end{itemize}
\end{proof}

\section{Graphs \& Indeterminate Strings}
\label{sect-graphs}
Here we extend the ideas of Section~\ref{sect-indet}
to establish a remarkable connection between labelled graphs
and indeterminate strings.
Recall that a graph is \itbf{simple} if and only if it is
undirected and contains neither loops nor multiple edges.

We define the \itbf{associated graph},
$\mathcal{G}_{\s{x}} = (V_{\s{x}},E_{\s{x}})$, of a string \s{x} to be the simple graph whose vertices are positions $1,2,\ldots, n$  in $\s{x}$ and whose edges are the pairs $(i, j)$ such that
$\s{x}[i] \approx \s{x}[j]$.  Thus $E_{\s{x}}$ identifies {\em all} the matching
positions in \s{x}, not only those determined by the prefix array.  On the other hand, we may think of each pair
$(i,j) \not\in E_{\s{x}}$ as a \itbf{negative} edge, $\s{x}[i] \not\approx \s{x}[j]$.  Thus $\mathcal{G_{\s{x}}}$ determines all the pairs of positions in \s{x} that match or do not match each other.

It should be noted here that while $\mathcal{G}_{\s{x}}$
determines the matchings of positions in \s{x},
it does not uniquely determine the alphabet of \s{x}.
For example,
$$E_{\s{x}} = \big\{(1,2),(1,3),(1,4),(1,5),(2,3),(2,4),(2,6),(3,5),(3,6)\big\}$$
describes
$$ \begin{array}{rcccccc}
& \scriptstyle 1 & \scriptstyle 2 & \scriptstyle 3 & \scriptstyle 4 & \scriptstyle 5 & \scriptstyle 6 \\
\s{x_1} = & \{a,b,c\} & \{a,b,d\} & \{a,c,d\} & b & c & d
\end{array}$$
as well as
$$ \begin{array}{rcccccc}
& \scriptstyle 1 & \scriptstyle 2 & \scriptstyle 3 & \scriptstyle 4 & \scriptstyle 5 & \scriptstyle 6 \\
\s{x_2} = & \{a,b\} & \{a,c\} & \{b,c\} & a & b & c
\end{array}$$

Thus a given simple graph $\mathcal{G} = (V, E)$ with $n$ vertices can be the associated graph of distinct strings.  Another way to generate additional strings is by permuting the vertex labels.  Given any {\it un}labelled $\mathcal{G}$,
we can generate strings $\s{x} = \s{x}[1..n]$
by labelling the $n$ vertices $V$ of $\mathcal{G}$
with integers $1..n$, and forming a string $\s{x}$ of which $\mathcal{G}$, with this labelling, is the associated graph.  Thus an unlabelled graph $\mathcal{G}$
corresponds to a set of strings \s{x} determined by the $n!$
possible labellings of $V$.
For instance, given the graph
\begin{center}
\begin{picture}(100,20)(0,0)
\put(25,10){\circle{10}}\put(30,10){\line(1,0){30}}\put(65,10){\circle{10}}
\put(85,10){\circle{10}}
\end{picture}
\end{center}
there are six possible labellings, three of which, for example
\begin{center}
\begin{picture}(280,20)(0,0)
\put(0,10){\circle{10}}\put(5,10){\line(1,0){30}}\put(40,10){\circle{10}}
\put(60,10){\circle{10}}
\put(-2.5,7){1}\put(37.5,7){2}\put(57.5,7){3}
\put(100,10){\circle{10}}\put(105,10){\line(1,0){30}}\put(140,10){\circle{10}}
\put(160,10){\circle{10}}
\put(97.5,7){2}\put(137.5,7){3}\put(157.5,7){1}
\put(200,10){\circle{10}}\put(205,10){\line(1,0){30}}\put(240,10){\circle{10}}
\put(260,10){\circle{10}}
\put(197.5,7){3}\put(237.5,7){1}\put(257.5,7){2}
\end{picture}
\end{center}
can be chosen to lead to distinguishable regular strings
$\s{x_1} = aab,\ \s{x_2} = abb,\ \s{x_3} = aba$, respectively.
In this case the other three labellings determine the same three strings.

Consider a given string \s{x}.
Suppose that for some position $i_0 \in 1..n$,
$\s{x}[i_0]$ matches $\s{x}[i_1], \s{x}[i_2], \ldots, \s{x}[i_k]$
for some $k \ge 0$, and matches no other elements of \s{x}.
We say that position $i_0$ is \itbf{essentially regular}
if and only if the entries in positions $i_1,i_2,\ldots,i_k$
match each other pairwise.
If every position in \s{x} is essentially regular,
we say that \s{x} itself is \itbf{essentially regular}.
For example, it is easy to verify that
$$\{a,b\}\{c,d\}\{a,b\}\{e,f\}ac\{a,h\}g,$$ though indeterminate,
is essentially regular with prefix array $\s{y} = 80103010$.
On the other hand, string (\ref{reg}),
$$\s{x} = \{a,b,e\}\{c\}\{a,d\}\{f\}\{b\}\{c\}\{d,e\}\{g\},$$
also with prefix array \s{y},
is not essentially regular.  We have
\begin{lemm}
\label{lemm-union}
A string \s{x} is essentially regular if and only if
the associated graph $\mathcal{G}_{\s{x}}$ of \s{x} is
a disjoint union of cliques.
\end{lemm}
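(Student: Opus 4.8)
The plan is to collapse both sides of the stated equivalence onto a single property — the \emph{transitivity} of the match relation on positions — and then to invoke the standard graph-theoretic fact that a simple graph is a disjoint union of cliques precisely when its adjacency relation is transitive.

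First I would record the elementary observations. Writing $i \sim j$ as shorthand for $\s{x}[i] \match \s{x}[j]$, the relation $\sim$ is reflexive (each $\s{x}[i]$ is a nonempty subset of $\Sigma$, so $\s{x}[i] \cap \s{x}[i] \neq \emptyset$) and symmetric, but in general \emph{not} transitive; indeed the whole point of indeterminate strings is the failure of transitivity (recall $\lambda = \{1,2\},\ \mu = 1,\ \nu = 2$ following Definition~\ref{defn-match}). By construction, $\mathcal{G}_{\s{x}}$ encodes $\sim$ on distinct positions, so the set of positions matching $\s{x}[i_0]$ is exactly the closed neighborhood of $i_0$ (its neighbors together with $i_0$ itself). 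I would then state the purely graph-theoretic characterization to be used: $\mathcal{G}_{\s{x}}$ is a disjoint union of cliques if and only if $\sim$ is transitive, equivalently every connected component is complete.

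For the ``only if'' direction I would assume $\s{x}$ essentially regular and deduce transitivity of $\sim$. Given $i \sim j$ and $j \sim k$, both $i$ and $k$ lie among the positions matching $\s{x}[j]$; essential regularity of position $j$ asserts exactly that these positions match pairwise, whence $i \sim k$. Thus $\sim$ is transitive and $\mathcal{G}_{\s{x}}$ is a disjoint union of cliques. For the ``if'' direction I would assume $\mathcal{G}_{\s{x}}$ is a disjoint union of cliques and fix an arbitrary position $i_0$. Because each component is complete and no edges join distinct components, the positions matching $\s{x}[i_0]$ are precisely the vertices of $i_0$'s component, which is a clique; hence they match pairwise and $i_0$ is essentially regular. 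Since $i_0$ was arbitrary, $\s{x}$ is essentially regular.

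I do not expect a serious obstacle, as the result is essentially a restatement of definitions; the one point demanding care is the bookkeeping — confirming that ``the positions matching $\s{x}[i_0]$'' coincide with the closed neighborhood of $i_0$ in $\mathcal{G}_{\s{x}}$ (so that whether or not $i_0$ is counted among them is immaterial, $i_0$ matching every such position), and that the \emph{local} condition of essential regularity imposed at each vertex is genuinely equivalent to the \emph{global} transitivity of $\sim$. Making the ``components are cliques $\Leftrightarrow$ adjacency transitive'' equivalence explicit is exactly what reduces each direction to a one-line argument.
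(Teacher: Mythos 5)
Your proof is correct. The paper in fact states Lemma~\ref{lemm-union} without any proof, treating it as immediate from the definitions, so there is no authorial argument to compare against; your route --- identifying ``essentially regular at every position'' with transitivity of the match relation $\s{x}[i] \match \s{x}[j]$ on positions, and then using the standard fact that a simple graph is a disjoint union of cliques exactly when its (reflexively closed) adjacency relation is transitive --- is the natural one, and you handle the only delicate points correctly (the closed versus open neighbourhood of $i_0$, and the vacuous case of an isolated position). Nothing is missing.
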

Thus combinatorics on (regular, essentially regular) words
is the study of labelled
collections of cliques.
For example, for $\s{x} = a^n$, the associated graph $\mathcal{G}_{\s{x}}$
is simply the complete graph $K_n$;
while for \s{x} such that $\s{x}[i] \match \s{x}[j]\ \Rightarrow i = j$,
$\mathcal{G}_{\s{x}}$ is $n$ copies of $K_1$.
More generally, for essentially regular \s{x},
the number of disjoint cliques in $\mathcal{G}_{\s{x}}$
is just the number of distinct letters in a regular string having 
the same associated graph as \s{x},
and the order of each clique is the number of times
the corresponding letter occurs.

Recall that
a \itbf{maximal clique} (sometimes abbreviated MC)
$K_t$ in a graph $\mathcal{G} = (V,E)$
is a clique that is not a subgraph of any other clique in $\mathcal{G}$.
Thus if $K_t$ is maximal, then for every vertex $j$ not in $K_t$,
there exists some vertex $i$ of $K_t$ such that
$(i,j) \not\in E$.
Note that every vertex of $\mathcal{G}$ must belong to at least
one maximal clique.

\begin{defn}
\label{defn-UID}
Let $\mathcal{G} = (V,E)$ be a finite simple graph,
let $S$ be the set of all MC in $\mathcal{G}$,
and let $\mathcal{I}$ be a smallest subset of $\mathcal{S}$
such that every edge of $E$ occurs at least once in $\mathcal{I}$.
Then the MC in $\mathcal{I}$ are said to be \itbf{independent} (I),
those in $\mathcal{D} = \mathcal{S}\- \mathcal{I}$ \itbf{dependent} (D).
\end{defn}

We say that an edge of $\mathcal{G}$ is a \itbf{free edge} if it belongs to exactly one MC.
Then every MC that contains a free edge is independent.

\begin{figure}[b!]
  \begin{minipage}{0.5\linewidth}
  \centering
  \includegraphics*[viewport = 1.5cm 12cm 9cm 19cm,scale=0.7]{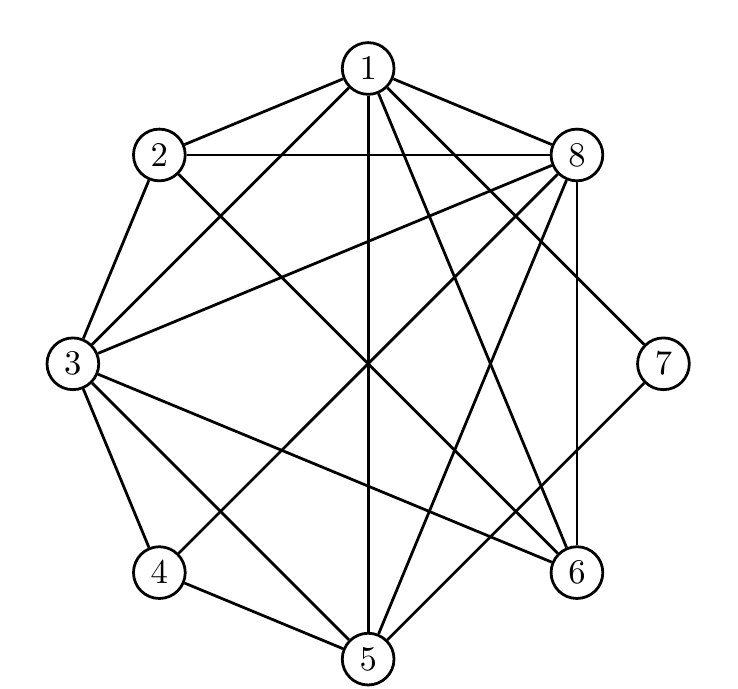}\\
  \caption[]{$\mathcal{G}_{_{\s{x}}}$ for\\ $\s{x} = \{a,b\}a\{a,c\}c\{b,c\}ab\{a,c\}$}
  \label{Graph-ex-gx-1}
  \end{minipage}
  \hfill
  \begin{minipage}{0.5\linewidth}
  \centering
  \includegraphics*[viewport = 1.5cm 12cm 9cm 19cm,scale=0.7]{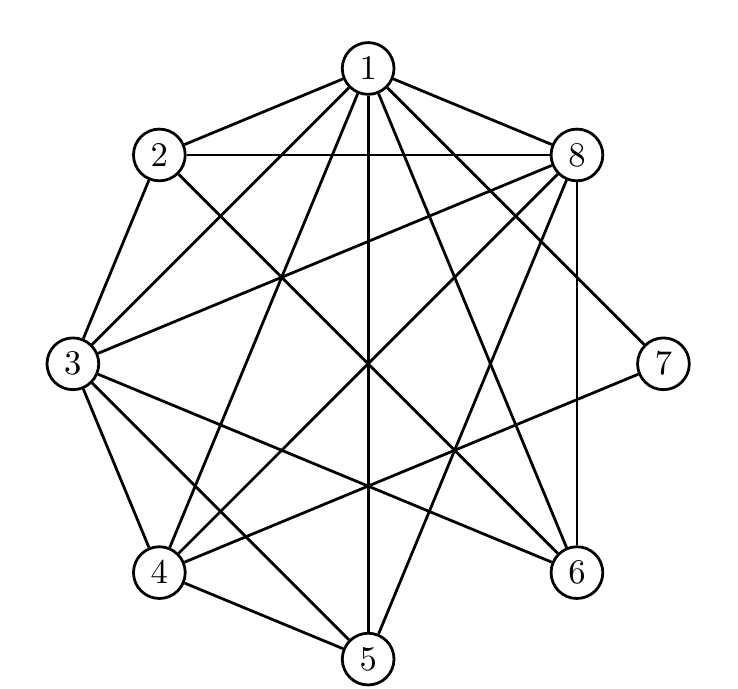}\\
  \caption[]{$\mathcal{G}_{\s{x'}}$ for\\ $\s{x'} = \{a,c,d\}a\{a,b,c\}\{b,d\}\{b,c\}ad\{a,b,c\}$}
  \label{Graph-ex-gx-2} 
  \end{minipage}
\end{figure}

We will see that for the associated graph $\mathcal{G} = \mathcal{G}_{\s{x}}$
of a string \s{x}, the independent MC are closely related to alphabet size.
Consider for example
\begin{equation}
\label{3lett}
\s{x} = \{a,b\}a\{a,c\}c\{b,c\}ab\{a,c\}.
\end{equation}
$\mathcal{G}_{\s{x}}$ (see Figure~\ref{Graph-ex-gx-1}) has four MC
\begin{equation}
\label{maxcl1}
C_1 = 12368,\ C_2 = 3458,\ C_3 = 1358,\ C_4 = 157,
\end{equation}
of which, by Definition~\ref{defn-UID},
$C_1,C_2,C_4$ are independent, since each contains at least
one free edge ($(1,2),(3,4),(1,7)$, respectively).
However, $1358$ is dependent, since its adjacencies
all occur elsewhere
($138$ is a subclique of $C_1$, $358$ a subclique of $C_2$,
$15$ an edge of $C_4$,
and so every edge of $1358$ occurs in at least one of the other three cliques).
Thus exactly three of the MC are independent,
and we see that (\ref{3lett}) has a minimum alphabet of three letters.
On the other hand, if $\mathcal{G}_{\s{x'}}$ (see Figure~\ref{Graph-ex-gx-2}) has MC
\begin{equation}
\label{maxcl2}
C_1 = 12368,\ C_2 = 3458,\ C_3 = 1358,\ C_4 = 147,
\end{equation}
all four of them are independent
(in $C_3$ the edge $15$ no longer occurs elsewhere),
and we claim that no corresponding string
\s{x'} can be constructed on fewer than four letters,
while
$$\s{x'} = \{a,c,d\}a\{a,b,c\}\{b,d\}\{b,c\}ad\{a,b,c\}$$
achieves the lower bound.

\begin{lemm}
\label{lemm-minsig}
Suppose that a graph $\mathcal{G}$ has exactly $\sigma$
independent maximal cliques.
Then there exists a string \s{x} on a base alphabet of size $\sigma$
whose associated graph $\mathcal{G}_{\s{x}} = \mathcal{G}$,
and on no smaller alphabet.
\end{lemm}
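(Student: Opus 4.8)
The plan is to identify the base alphabets of strings with associated graph $\mathcal{G}$ with the edge clique covers of $\mathcal{G}$, and to show that a smallest such cover is realized by the $\sigma$ independent maximal cliques of Definition~\ref{defn-UID}. The bridge is the observation that, for any string \s{x} and any letter $a$ of its base alphabet, the set $P_a$ of positions $i$ with $a \in \s{x}[i]$ is a clique of $\mathcal{G}_{\s{x}}$ (all its positions pairwise share $a$, hence pairwise match), and that $\s{x}[i] \match \s{x}[j]$ holds precisely when $i$ and $j$ lie together in some $P_a$. Hence the cliques $\{P_a\}$ always form an edge clique cover of $\mathcal{G}_{\s{x}}$, and conversely any assignment of one letter per clique in an edge clique cover reconstructs a string whose associated graph is $\mathcal{G}$.

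For the upper bound I would take the independent maximal cliques $\mathcal{I} = \{I_1,\ldots,I_\sigma\}$ supplied by Definition~\ref{defn-UID}, introduce a distinct letter $a_\ell$ for each $I_\ell$, and set $\s{x}[i] = \{a_\ell : i \in I_\ell\}$. Since $\mathcal{I}$ covers every edge, every non-isolated vertex $i$ lies in at least one $I_\ell$, so $\s{x}[i]$ is nonempty and \s{x} is a genuine string on $\sigma$ letters. One then checks both inclusions: if $(i,j) \in E$ then that edge is covered by some $I_\ell$, giving $a_\ell \in \s{x}[i] \cap \s{x}[j]$ and hence $\s{x}[i] \match \s{x}[j]$; conversely a shared letter $a_\ell$ forces $i,j \in I_\ell$, and since $I_\ell$ is a clique, $(i,j) \in E$. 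Thus $\mathcal{G}_{\s{x}} = \mathcal{G}$, as required.

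For the matching lower bound I would begin from an arbitrary string \s{x'} on a base alphabet $\Sigma'$ with $\mathcal{G}_{\s{x'}} = \mathcal{G}$ and form its cliques $P_a$, $a \in \Sigma'$. Extending each $P_a$ to a maximal clique $M_a \supseteq P_a$ preserves edge coverage, so $\{M_a : a \in \Sigma'\}$ is a family of at most $|\Sigma'|$ maximal cliques covering every edge of $\mathcal{G}$. By the minimality built into Definition~\ref{defn-UID}, $\sigma = |\mathcal{I}| \le |\{M_a\}| \le |\Sigma'|$, so no string with associated graph $\mathcal{G}$ can use fewer than $\sigma$ letters; combined with the construction this pins the minimum at $\sigma$.

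The step I expect to need the most care is the passage from $\Sigma'$ to the count $\sigma$, which is defined via \emph{maximal} cliques: one must argue that extending each covering clique to a maximal one preserves a cover without increasing its size, so that a minimum maximal-clique edge cover is no larger than an arbitrary clique edge cover. A second point to dispatch is isolated vertices: a vertex with no incident edge contributes no edge to cover yet still demands a nonempty, hence private, letter, so either one assumes $\mathcal{G}$ has no isolated vertex or one augments $\mathcal{I}$ with the corresponding copies of $K_1$, adjusting the count accordingly. I would flag this explicitly so that the identification of $\sigma$ with the minimum alphabet size is exact.
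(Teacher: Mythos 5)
Your proof is correct and follows essentially the same route as the paper: assign one fresh letter per independent maximal clique to build \s{x}, and derive the lower bound from the minimality of $\mathcal{I}$ as an edge clique cover. You are in fact more careful than the paper on two points it glosses over --- the explicit extension of each letter-class $P_a$ to a maximal clique $M_a$ (the paper simply asserts that the $\sigma'$ letters of \s{x'} yield $\sigma'$ \emph{independent maximal} cliques), and the isolated-vertex caveat, which the paper's construction silently violates by leaving $\s{x}[i]$ empty at such a vertex.
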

\begin{proof}
Let $\mathcal{I} = \{I_1,I_2,\ldots,I_{\sigma}\}$ be the set
of independent MC.
Suppose that initially every $\s{x}[i],\ i = 1,2,\ldots, n$, is empty;
then for $s = 1,2,\ldots,\sigma$,
form
$$\s{x}[i] \la \s{x}[i] \cup \lambda_s$$
if and only if vertex $i$ occurs in $I_s$,
where $\lambda_s$ is a unique regular letter associated
with $I_s$.
This ensures that $\s{x}[i_1] \approx \s{x}[i_2]$ if and only if
$(i_1,i_2)$ is an edge in one of the independent MC of $\mathcal{G}$.
Since by Definition~\ref{defn-UID} this includes all the edges,
it follows that $\mathcal{G} = \mathcal{G}_{\s{x}}$ is the associated graph of \s{x},
a string on a base alphabet of size $\sigma$.
Suppose that there exists a string \s{x'} on a base alphabet
of size $\sigma' < \sigma$,
where $\mathcal{G}_{\s{x'}} = \mathcal{G}_{\s{x}}$.
But then, since the regular letters in \s{x'} collectively determine
all the edges and exactly $\sigma'$ independent MC in $\mathcal{G}_{\s{x'}}$,
this means that there exists a set of independent MC in $\mathcal{G}$
of cardinality $\sigma' < \sigma$, contradicting the condition of
Definition~\ref{defn-UID} that $\mathcal{I}$ is the smallest such subset.
This completes the proof.  \qed
\end{proof}
Lemma~\ref{lemm-minsig} has an easy corollary:

\begin{lemm}
\label{lemm-easier}
Suppose that $\mathcal{G}_{\s{x}} = (V,E)$ is the associated graph
of a string \s{x} with $\sigma$ independent maximal cliques
$\mathcal{I} = \{I_1,I_2,\ldots,I_{\sigma}\}$.
\begin{itemize}
\item[(a)]
If a vertex $i \in V$ belongs to exactly $s \in 1..\sigma$
of the maximal cliques in $\mathcal{I}$,
then $|\s{x}[i]| \ge s$.
\item[(b)]
If an edge $(i,j) \in E$ belongs to exactly $s \in 1..\sigma$
of the maximal cliques in $\mathcal{I}$,
then $|\s{x}[i] \cap \s{x}[j]| \ge s$.
\end{itemize}
\end{lemm}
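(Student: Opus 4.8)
The plan is to deduce both parts directly from the construction and minimality established in Lemma~\ref{lemm-minsig}, rather than reasoning about $\s{x}$ abstractly. The key observation is that Lemma~\ref{lemm-minsig} furnishes a canonical string realizing $\mathcal{G}_{\s{x}}$ in which each independent maximal clique $I_s$ contributes exactly one regular letter $\lambda_s$ to the positions it covers; any admissible $\s{x}$ must carry at least this much information in order to produce the same matching relation. First I would fix notation: let $\mathcal{I} = \{I_1,\ldots,I_\sigma\}$ be the independent MC of $\mathcal{G}_{\s{x}}$, and recall that by Definition~\ref{defn-UID} every edge of $E$ lies in at least one member of $\mathcal{I}$.

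For part (a), suppose vertex $i$ lies in exactly $s$ of the independent cliques, say $I_{j_1},\ldots,I_{j_s}$. The goal is to show $|\s{x}[i]| \ge s$, i.e.\ that the element $\s{x}[i]$ must contain at least $s$ distinct symbols. The idea is that the $s$ cliques are genuinely distinct independent MC, so for each pair $I_{j_a}, I_{j_b}$ there is some edge that forces $\s{x}[i]$ to witness membership in $I_{j_a}$ separately from $I_{j_b}$. More carefully, each independent clique $I_{j_a}$ contains a free edge (or at least an edge not covered by the remaining cliques in $\mathcal{I}$), and the symbol of $\s{x}[i]$ responsible for the matches along $I_{j_a}$ cannot simultaneously account for the matches along a different $I_{j_b}$ without collapsing two independent cliques into one — contradicting the minimality of $\mathcal{I}$. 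Hence $\s{x}[i]$ needs at least $s$ symbols, one per clique through $i$.

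For part (b), suppose the edge $(i,j)$ lies in exactly $s$ of the cliques in $\mathcal{I}$. Since $(i,j)$ being an edge means $\s{x}[i] \approx \s{x}[j]$, i.e.\ $\s{x}[i]\cap\s{x}[j]\neq\emptyset$, I would argue that each of the $s$ cliques containing $(i,j)$ forces a \emph{distinct} common symbol into $\s{x}[i]\cap\s{x}[j]$. The canonical construction assigns $\lambda_{j_a}$ to both endpoints of every edge inside $I_{j_a}$, so in that string the intersection contains the $s$ distinct letters $\lambda_{j_1},\ldots,\lambda_{j_s}$; the lower bound then transfers to an arbitrary $\s{x}$ realizing the same $\mathcal{G}_{\s{x}}$ by the same independence argument as in (a), now applied to the shared symbol rather than to a single vertex. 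Note that (b) refines (a): a vertex-membership count is a special case of counting cliques through a trivial structure, and indeed (a) follows from (b) applied along any edge incident to $i$ within each clique.

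The main obstacle I expect is part (a)'s claim that the $s$ cliques through $i$ force $s$ \emph{distinct} symbols, rather than merely $\ge 1$. The subtlety is that two independent maximal cliques can share the vertex $i$ and even share other vertices, so one must rule out the possibility that a single symbol in $\s{x}[i]$ simultaneously realizes the adjacencies of two different cliques through $i$. The resolution hinges on independence: if one symbol served two cliques $I_{j_a}, I_{j_b}$ at $i$, then every edge of $I_{j_a}$ incident to $i$ and every such edge of $I_{j_b}$ would be carried by the same letter, which (by maximality and the edge-covering requirement of Definition~\ref{defn-UID}) would let us merge or drop a clique from $\mathcal{I}$ and still cover all edges, contradicting that $\mathcal{I}$ is smallest. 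Making this merging step airtight is where the real care is needed; once it is in place, the counting in both parts is immediate.
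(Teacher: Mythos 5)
The paper supplies no proof of this lemma (it is presented as an ``easy corollary'' of Lemma~\ref{lemm-minsig}), so your proposal must stand on its own, and its central step is a genuine gap --- indeed the one you flag yourself. You need to show that a single letter of $\s{x}[i]$ cannot ``serve'' two distinct independent maximal cliques through $i$, and you propose to derive a contradiction with the minimality of $\mathcal{I}$ by merging or dropping a clique. This cannot be made airtight, because the correspondence between letters and cliques runs the other way: each letter $\lambda$ determines a clique $V_\lambda = \{k : \lambda \in \s{x}[k]\}$, and the family $\{V_\lambda\}$ is an edge clique cover of $\mathcal{G}_{\s{x}}$, but an independent maximal clique $I$ containing $i$ need not be covered by any single $V_\lambda$: its edges may be distributed over several letters, most of which need not occur at position $i$ at all. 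So membership of $i$ in $I$ forces no additional letter into $\s{x}[i]$, and the count collapses.

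In fact the statement is false as written, so no repair of the argument is possible. Take $n=5$ and
$$\s{x} = \{a,b\}\,\{a,d\}\,\{a,c\}\,\{b,c\}\,\{b,d\}.$$
The letter-occurrence cliques are $V_a=\{1,2,3\}$, $V_b=\{1,4,5\}$, $V_c=\{3,4\}$, $V_d=\{2,5\}$, so $\mathcal{G}_{\s{x}}$ is the wheel with hub $1$ and rim cycle $2,3,4,5$: the only non-edges are $(2,4)$ and $(3,5)$. Its maximal cliques are exactly the four triangles $\{1,2,3\},\{1,3,4\},\{1,4,5\},\{1,2,5\}$; each contains a free edge (its rim edge), so all four are independent and $\sigma=4$, consistent with Lemma~\ref{lemm-minsig} since four letters are indeed necessary here. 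Now vertex $1$ belongs to all $s=4$ cliques of $\mathcal{I}$, yet $|\s{x}[1]|=2$, contradicting (a); and edge $(1,2)$ belongs to exactly $s=2$ cliques of $\mathcal{I}$ ($\{1,2,3\}$ and $\{1,2,5\}$), yet $|\s{x}[1]\cap\s{x}[2]|=|\{a\}|=1$, contradicting (b). What is true, and provable along the lines you sketch, is only the weaker local bound that $|\s{x}[i]|$ is at least the minimum number of cliques through $i$ needed to cover the edges incident to $i$; the stated inequalities do hold, with equality, for the particular string constructed in the proof of Lemma~\ref{lemm-minsig}, but not for an arbitrary $\s{x}$ with the given associated graph.
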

The following simple algorithm might be a candidate to compute a set of
independent maximal cliques:
\begin{itemize}
\item[1.]
Label I every MC that has a free edge;
\item[2.]
Alternate steps (a) and (b) until no new labellings occur:
\begin{itemize}
\item[$(a)$]
Label D each unlabelled MC with at least one edge
in an MC labelled I;
\item[$(b)$]
Label I each unlabelled MC with at least one edge in an MC
labelled D.
\end{itemize}
\end{itemize}
However, suppose that some subgraph $\mathcal{H}$ of $\mathcal{G}$
remains unlabelled after the termination of step 2 of the algorithm.
Then every edge $e$ of $\mathcal{H}$ must belong to at least two MC of $\mathcal{H}$,
since otherwise it would have been labelled in step 1.
Moreover, any MC containing $e$ cannot be labelled either I or D,
and so $\mathcal{H}$ can only be a subgraph sharing no edges
with the rest of $\mathcal{G}$ and also containing no free edges.

\begin{figure}[t]
  \begin{minipage}{0.45\linewidth}
  \centering
  \includegraphics*[viewport = 1.5cm 12cm 9cm 19cm,scale=0.7]{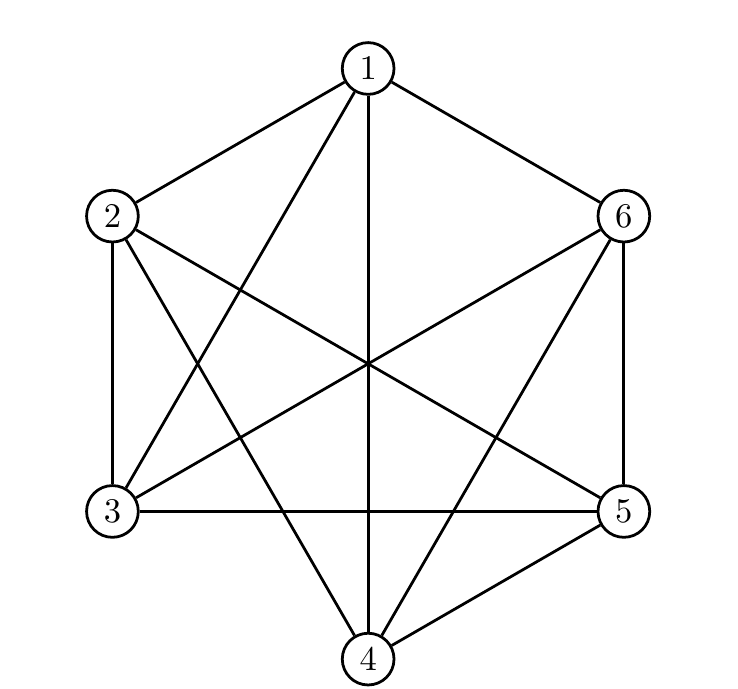}\\
\caption{Graph $\mathcal{G}$ on six vertices with eight MC,
four of them independent, and no free edges.}
  \label{Graph-ex-gx-4}
  \end{minipage}
  \hfill
  \begin{minipage}{0.45\linewidth}
  \centering
  \includegraphics*[viewport = 1.5cm 12cm 9cm 19cm,scale=0.7]{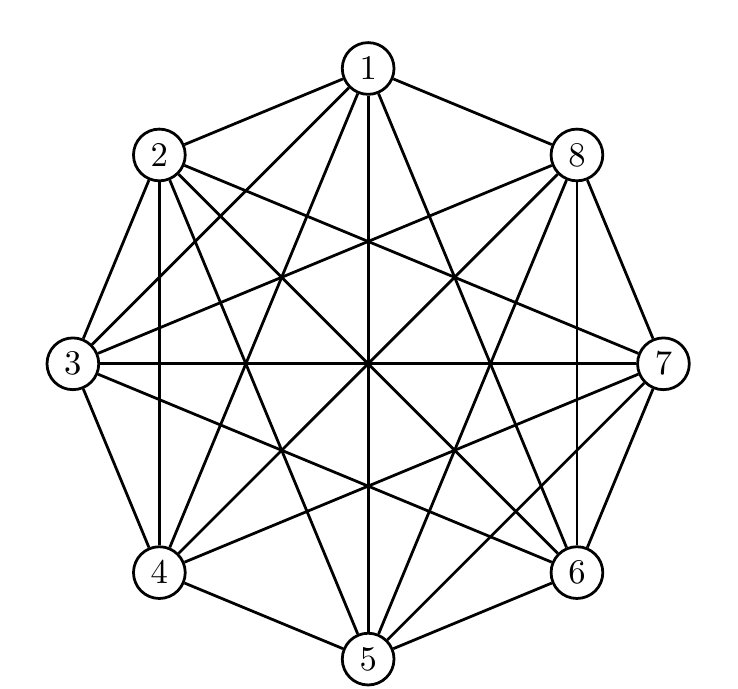}\\
\caption{Graph $\mathcal{G}$ on eight vertices with 16 MC, six of them independent,
and no free edges.}
  \label{Graph-ex-gx-3} 
  \end{minipage}
\end{figure}

To show that such a subgraph can exist,
consider the triangulated graph $\mathcal{G}$
on six vertices $V = \{1,2,3,4,5,6\}$, where
the only pairs $(i,j)$ that are {\it not} edges are
$(1,5),\ (2,6)$ and $(3,4)$, as shown in Figure~\ref{Graph-ex-gx-4}.
There are eight MC
$$123,146,245,356;\ 456,124,235,136$$
of which either the first four or the last four can be chosen to be independent,
thus by Lemma~\ref{lemm-minsig} yielding a corresponding string \s{x}
on four regular letters.
Note that every edge occurs in exactly two MC,
so that by Lemma~\ref{lemm-easier}(b) every position in
the corresponding string \s{x} contains at least two regular letters;
for example,
$$\s{x} = \{a,b\},\{a,c\},\{a,d\},\{b,c\},\{c,d\},\{b,d\}.$$

A more complex example is the graph $\mathcal{G}$ on vertices
$V = \{1,2,3,4,5,6,7,8\}$ with maximal cliques $\{1,2,3,4\}$,
$\{5,6,7,8\}$, and 14 others,
as shown in Figure~\ref{Graph-ex-gx-3}.
The only pairs $(i,j)$
that are {\it not} edges are
$(1,7),\ (2,8),\ (3,5),$ and $(4,6)$.
In this case it turns out that there are six independent MC, for example
$$1234,5678,1368,1458,2367,2457,$$
and so by Lemma~\ref{lemm-minsig} a corresponding string \s{x}
can be constructed using six regular letters
(one letter per MC):
$$\s{x} = \{a,c,d\},\{a,e,f\},\{a,c,e\},\{a,d,f\},\{b,d,f\},\{b,c,e\},\{b,e,f\},\{b,c,d\}.$$

\begin{figure}[t]
  \centering
  \includegraphics*[viewport = 1.5cm 12cm 9cm 19cm,scale=0.7]{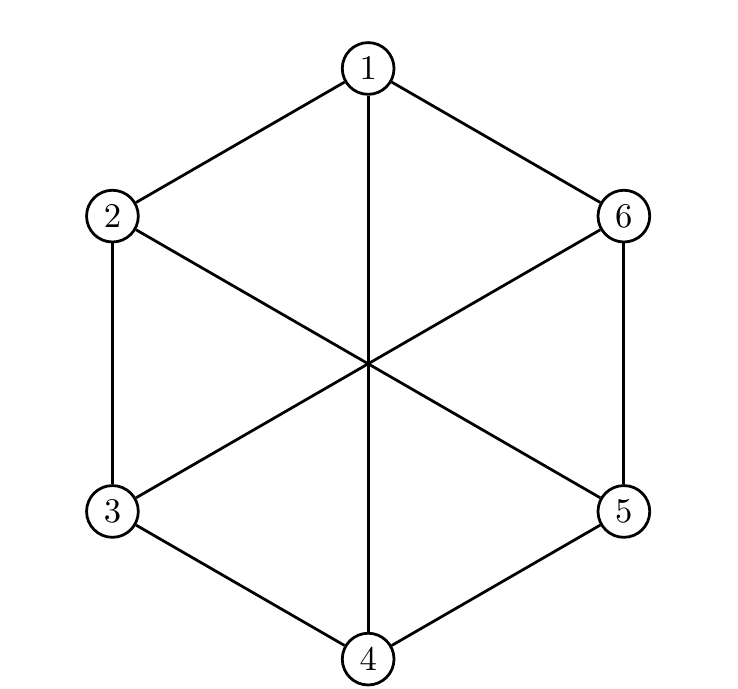}\\
  \caption{Identifying the minimum alphabet size from the number of independent maximal cliques (Lemma~\ref{lemm-minsig})}
  \label{Graph-ex-gx-5}
\end{figure}

These examples show that whenever graphs or subgraphs without free edges exist,
the identification of independent MC becomes more difficult.
In such cases we know of no algorithm to compute them apart from exhaustive search.
Thus, while it is straightforward, given \s{x},
to determine $\mathcal{G}_{\s{x}}$,
it is nontrivial, given $\mathcal{G}$, to determine
a string \s{x} on a smallest alphabet such that $\mathcal{G} = \mathcal{G}_{\s{x}}$.

From Lemma~\ref{lemm-union} it follows that the
maximum alphabet size required for an essentially regular string \s{x}
is $n$; thus to compute \s{x} from a feasible array \s{y}
is potentially an $O(n)$ algorithm and, as shown in \cite{CCR09},
is actually $O(n)$.
However, for indeterminate strings, Lemma~\ref{lemm-minsig}
shows that the minimum alphabet size is
the number $\sigma$ of independent maximal cliques in
$\mathcal{G}_{\s{x}}$.
A classical result from graph theory \cite{MM65} shows that
the number of maximal cliques may be as much as $3^{n/3}$,
and so an indeterminate string potentially could require an alphabet
of exponential size.
For example, for $n = 6$, consider the graph $\mathcal{G}_{\s{x}}$
on six vertices $V_{\s{x}} = \{1,2,\ldots,6\}$
with nine edges ($9 = 3^{6/3}$)
$$E_{\s{x}} = \{(1,2),\ (1,4),\ (1,6),\ (2,3),\ (2,5),\ (3,4),\ (3,6),\ (4,5),\ (5,6)\},$$
as shown in Figure~\ref{Graph-ex-gx-5}. Each of these edges is a maximal independent 2-clique,
and so by Lemma~\ref{lemm-minsig} a corresponding string is
$$\s{x} = \{a,b,c\}\{a,d,e\}\{d,f,g\}\{b,f,h\}\{e,h,i\}\{c,g,i\},$$
defined on an alphabet of nine regular letters
with prefix array $\s{y} = 650301$.

Note here that information is lost in the transformation
from \s{x} to \s{y}.
The prefix graph $\mathcal{P}^+$ corresponding to $650301$ has the same
nine edges $E_{\s{x}}$, but $\mathcal{P}^-$ contains,
instead of the six negative edges
$$(1,3),\ (1,5),\ (2,4),\ (2,6),\ (3,5),\ (4,6)$$
implied by $E_{\s{x}}$,
just two: $E^- = \{(1,3),\ (1,5)\}$.
Thus by reverse engineering \s{y} we get the much simpler
(but still necessarily indeterminate) string
$$\s{x'} = a\{ab\}b\{ab\}b\{ab\},$$
whose associated graph $\mathcal{G}_{\s{x'}}$
has, in addition to the nine edges of $E_{\s{x}}$,
also the four (now positive) edges $(2,4),\,(2,6),\,(3,5),\,(4,6)$.
Thus in $\mathcal{G}_{\s{x'}}$ there are only two
maximal cliques, on the vertices 23456 and 1246, independent of each other,
and so by Lemma~\ref{lemm-minsig} \s{x'} can be constructed
using $\sigma = 2$ regular letters.

The fastest known algorithm to compute all maximal cliques
is described in \cite{BK73},
but of course it must be exponential in the worst case
($3^{n/3}$ maximal cliques);
it is not known how many independent maximal cliques can exist
in a graph constructed from a prefix array.
The graph $\mathcal{P}^+$ corresponding to $\s{y_2} = 80420311$
contains seven independent maximal cliques
$(138,146,17,24,25,27,35)$.
Thus, regarding this graph as an associated graph $\mathcal{G}_{\s{x}}$
of some string \s{x} tells us by Lemma~\ref{lemm-minsig}
that seven regular letters would be needed to represent it.

\section{Summary \& Future Work}
\label{sect-conc}
In this paper we have explored connections among
indeterminate strings, prefix arrays, and undirected graphs,
some of them quite unexpected (by us, at least).
We believe that many other connections exist
that may yield combinatorial insights and
thus more efficient algorithms.
For example:
\begin{enumerate}
\item
How many independent  maximal cliques can exist in the associated graph
$\mathcal{G}_{\s{x}}$ of a string \s{x} computed (on a minimum alphabet)
from a given prefix array \s{y}?
\item
Find an efficient algorithm to compute a string on a minimum alphabet
corresponding to a given nonregular prefix array.
\item
What classes of graphs $\mathcal{G}$ exist that,
as associated graphs $\mathcal{G} = \mathcal{G}_{\s{x}}$ of some string \s{x},
have fewer than exponential independent maximal cliques,
and so therefore may give rise to efficient algorithms for the
determination of \s{x} on a minimum alphabet?
Put another way: characterize graphs that have an exponential
number of independent maximal cliques.
\item
Can we recognize strings \s{x} with associated graphs $\mathcal{G}_{\s{x}}$
that have an exponential number of independent maximal cliques?
\item
Can known results from graph theory be used to design efficient
algorithms for computing patterns in indeterminate strings?
\end{enumerate}

\section*{Acknowledgements}
We are grateful to Jean-Pierre Duval and Arnaud Lefebvre
of the Universit\'{e} de Rouen for useful discussions.

\def\AJC{Australasian J.\ Combinatorics\ }
\def\AWOCA{Australasian Workshop on Combinatorial Algs.}
\def\CPM{Annual Symp.\ Combinatorial Pattern Matching}
\def\COCOON{Annual International Computing \& Combinatorics Conference}
\def\FOCS{IEEE Symp.\ Found.\ Computer Science}
\def\AESA{Annual European Symp.\ on Algs.}
\def\LATA{Internat.\ Conf.\ on Language \& Automata Theory \& Applications}
\def\IWOCA{Internat.\ Workshop on Combinatorial Algs.}
\def\AWOCA{Australasian Workshop on Combinatorial Algs.}
\def\STACS{Symp.\ Theoretical Aspects of Computer Science}
\def\ICALP{Internat.\ Colloq.\ Automata, Languages \& Programming}
\def\IJFCS{Internat.\ J.\ Foundations of Computer Science\ }
\def\ISAAC{Internat.\ Symp.\ Algs.\ \& Computation}
\def\SPIRE{String Processing \& Inform.\ Retrieval Symp.}
\def\SWAT{Scandinavian Workshop on Alg.\ Theory}
\def\PSC{Prague Stringology Conf.}
\def\ALG{Algorithmica\ }
\def\CSUR{ACM Computing Surveys\ }
\def\FI{Fundamenta Informaticae\ }
\def\IPL{Inform.\ Process.\ Lett.\ }
\def\IS{Inform.\ Sciences\ }
\def\JACM{J.\ Assoc.\ Comput.\ Mach.\ }
\def\CACM{Commun.\ Assoc.\ Comput.\ Mach.\ }
\def\MCS{Math.\ in Computer Science\ }
\def\NJC{Nordic J.\ Comput.\ }
\def\SICOMP{SIAM J.\ Computing\ }
\def\SIDMA{SIAM J.\ Discrete Math.\ }
\def\JCB{J.\ Computational Biology\ }
\def\JA{J.\ Algorithms\ }
\def\JCMCC{J.\ Combinatorial Maths.\ \& Combinatorial Comput.\ }
\def\JDA{J.\ Discrete Algorithms\ }
\def\JALC{J.\ Automata, Languages \& Combinatorics\ }
\def\SODA{ACM-SIAM Symp.\ Discrete Algs.\ }
\def\SPE{Software, Practice \& Experience\ }
\def\TCJ{The Computer Journal\ }
\def\TCS{Theoret.\ Comput.\ Sci.\ }

\end{document}